\def\<#1,#2>{\langle #1,#2 \rangle}
\def\bothID{\rlap{\hbox to.97\wd0{\hss\vrule height.06\ht0 width.82\wd0}}
\copy0\rlap{\kern-.36\wd0\vrule height1.05\ht0 width.05\ht0}\kern.14\wd0}
\newtheorem{theorem}{Theorem}
\newtheorem{corollary}[theorem]{Corollary}
\newtheorem{definition}[theorem]{Definition}
\newtheorem{lemma}[theorem]{Lemma}
\newtheorem{proposition}[theorem]{Proposition}
\newtheorem{remark}[theorem]{Remark}
\newcommand\reallywidehat[1]{%
\savestack{\tmpbox}{\stretchto{%
  \scaleto{%
    \scalerel*[\widthof{\ensuremath{#1}}]{\kern-.6pt\bigwedge\kern-.6pt}%
    {\rule[-\textheight/2]{1ex}{\textheight}}%WIDTH-LIMITED BIG WEDGE
  }{\textheight}%
}{0.5ex}}%
\stackon[1pt]{#1}{\tmpbox}%
}
\begin{document}

\title{The Black-Scholes Equation in Presence of Arbitrage}

\author{Simone Farinelli\\
        Core Dynamics GmbH\\
        Scheuchzerstrasse 43\\
        CH-8006 Zurich\\
        Email: simone@coredynamics.ch\\and\\
        Hideyuki Takada\\
        Department of Information Science\\
        Narashino Campus, Toho University\\
        2-2-1-Miyama, Funabashi-Shi\\ J-274-8510 Chiba\\
        Email: hideyuki.takada@is.sci.toho-u.ac.jp
        }

\maketitle

\begin{abstract}
We apply Geometric Arbitrage Theory to obtain results in Mathematical Finance, which do not need stochastic differential geometry in their formulation. First, for generic market dynamics given by a subclass of multidimensional It\^{o} processes we specify and prove the equivalence between No-Free-Lunch-with-Vanishing-Risk (NFLVR) and expected utility maximization. As a by-product we provide a geometric characterization of the No-Unbounded-Profit-with-Bounded-Risk (NUPBR) condition given by the zero curvature (ZC) condition for this subclass of It\^{o} processes. Finally, we extend the Black-Scholes partial differential equation to markets allowing arbitrage.\\\\
\text{Keywords: NFLVR, NUPBR, Geometric Arbitrage Theory, Non linear Black Scholes PDE}\\\\
\text{AMS: 91G80, 53C07}
\end{abstract}

\tableofcontents

\section{Introduction}
This paper provides applications of a conceptual structure - called Geometric Arbitrage Theory (GAT in short) - to prove results in financial mathematics which are comprehensible without the use of stochastic differential geometry and extend well known classical facts. We expect therefore to make GAT accessible to a wider public in the mathematical finance community.\par
GAT rephrases classical stochastic finance in stochastic differential geometric terms in order to characterize arbitrage. The main idea of the GAT approach consists of modeling markets made of basic financial instruments together with their term structures as principal fibre bundles. Financial features of this market - like no arbitrage and equilibrium - are then characterized in terms of standard differential geometric constructions - like curvature - associated to a natural connection in this fibre bundle. Principal fibre bundle theory has been heavily exploited in theoretical physics as the language in which laws of nature can be best formulated by providing an invariant framework to describe physical systems and their dynamics. These ideas can be carried over to mathematical finance and economics. A market is a financial-economic system that can be described by an appropriate principle fibre bundle. A principle like the invariance of market laws under change of num\'{e}raire can be seen then as gauge invariance.\par
The fact that gauge theories are the natural language to describe economics was first proposed by Malaney and Weinstein in the context of the economic index problem (\cite{Ma96}, \cite{We06}). Ilinski (see \cite{Il00} and \cite{Il01}) and Young \cite{Yo99} proposed to view arbitrage as the curvature of a gauge connection, in analogy to some physical theories. Independently, \cite{SmSp98} further developed \cite{FlHu96} seminal work and utilized techniques from differential geometry to reduce the complexity of asset models before stochastic modeling.\par
Why is arbitrage modelling important? The no arbitrage condition is only an approximation and it is not fulfilled when we consider real markets. This is the case for non traded assets, traded assets when the frequency of the trades falls below $2$ minutes (cf. \cite{FaVa12}) or electricity markets, where we do not have the possibility of completely liquidating the portfolio at any given time, as we implicitly assume in mathematical finance. This has been recognized for a long time and in recent years the modelling of markets allowing for arbitrage beyond pathological cases has made a relevant progress (see f.i. \cite{HuPr15, Ru13}). The benchmark approach to mathematical finance models markets by \cite{HePl06} allowing for arbitrage even if this is not explicitly mentioned.\par
This paper is structured as follows. Section 2 reviews classical stochastic finance and  Geometric Arbitrage Theory, summarizing \cite{Fa15}, where GAT has been given a rigorous mathematical foundation utilizing the formal background of stochastic differential geometry as in \cite{Schw80}, \cite{El82}, \cite{Em89}, \cite{HaTh94},
\cite{St00} and \cite{Hs02}. Arbitrage is seen as curvature of a principal fibre bundle representing the market which defines the quantity of arbitrage associated to it. The zero curvature condition is a weaker condition than No-Free-Lunch-with-Vanishing-Risk (NFLVR). It becomes equivalent under additional assumptions introduced for a guiding example, a market whose asset prices are It\^{o} processes. In general, the zero curvature condition follows from the No-Unbounded-Profit-with-Bounded-Risk (NUPBR) condition, as we prove in Section 3, where we analyze the relationship between arbitrage and expected utility maximization. The equivalence is proved for a certain subclass of It\^{o} processes. In Section 4, GAT is applied to prove an extension of the Black Scholes PDE in the case of markets allowing for arbitrage. Section 5 concludes, and Appendix A reviews Nelson's stochastic derivatives.

%%%%%%%%%%%%%%%%%%%%%%%%%%%%%%%%%%%%%%%%%%%%%%%%%%%%%%%%%%%%%%%%%%
%
%
%                     Section 2
%
%
%%%%%%%%%%%%%%%%%%%%%%%%%%%%%%%%%%%%%%%%%%%%%%%%%%%%%%%%%%%%%%%%%%
\section{Geometric Arbitrage Theory Background}\label{section2}
In this section we explain the main concepts of Geometric Arbitrage Theory introduced in \cite{Fa15}, to which we refer for proofs and additional examples.
Since the differential geometric thinking is not so widespread in the mathematical finance community, we explain in detail the reformulation of the asset model as principal fibre bundle with a connection, whose curvature can be seen as a measure of arbitrage. New results and more pedagogical results in comparison to \cite{Fa15} are provided.

\subsection{The Classical Market Model}\label{StochasticPrelude}

In this subsection we will summarize the classical set up, which will be rephrased in Section \ref{foundations} in differential geometric terms. We basically follow \cite{HuKe04} and the ultimate reference \cite{DeSc08}.\par We assume continuous time trading and that the set of trading dates is $[0,+\infty[$. This assumption is general enough to embed the cases of finite and infinite discrete times as well as the one with a finite horizon in continuous time. Note that while it is true that in the real world trading occurs at discrete times only, these are not known a priori and can be virtually any points in the time continuum. This motivates the technical effort of continuous time stochastic finance.\par The uncertainty is modelled by a filtered probability space $(\Omega,\mathcal{A}, \mathbb{P})$, where $\mathbb{P}$ is the statistical (physical) probability measure, $\mathcal{A}=\{\mathcal{A}_t\}_{t\in[0,+\infty[}$ an increasing family of sub-$\sigma$-algebras of $\mathcal{A}_{\infty}$ and $(\Omega,\mathcal{A}_{\infty}, \mathbb{P})$ is a probability space. The filtration $\mathcal{A}$ is assumed to satisfy the usual conditions, that is
\begin{itemize}
\item right continuity: $\mathcal{A}_t=\bigcap_{s>t}\mathcal{A}_s$ for all $t\in[0,+\infty[$.
\item $\mathcal{A}_0$ contains all null sets of
$\mathcal{A}_{\infty}$.
\end{itemize}

The market consists of finitely many \textbf{assets} indexed by $j=1,\dots,N$, whose \textbf{nominal prices} are given by the vector valued semimartingale $S:[0,+\infty[\times\Omega\rightarrow\mathbb{R}^N$ denoted by $(S_t)_{t\in[0,+\infty[}$ adapted to the filtration $\mathcal{A}$.
The stochastic process $(S^ j_t)_{t\in[0,+\infty[}$ describes the price at time $t$ of the $j$-th asset in terms of  unit of cash \textit{at time $t=0$}. More precisely, we assume the existence of a $0$-th asset, the \textbf{cash}, a strictly positive semimartingale, which evolves according to $S_t^0=\exp(\int_0^tdu\,r^0_u)$, where the integrable semimartingale $(r^0_t)_{t\in[0,+\infty[}$ represents the continuous interest rate provided by the cash account: one always knows in advance what the interest rate on the own bank account is, but this can change from time to time. The cash account is therefore considered the locally risk less asset in contrast to the other assets, the risky ones. In the following we will mainly utilize \textbf{discounted prices}, defined as $\hat{S}_t^j:=S_t^j/S^{0}_t$, representing the asset prices in terms of \textit{current} unit of cash.\par
 We remark that there is no need to assume that asset prices are positive. But, there must be at least one strictly positive asset, in our case the cash. If we want to renormalize the prices by choosing another asset instead of the cash as reference, i.e. by making it to our \textbf{num\'{e}raire}, then this asset must have a strictly positive price process. More precisely, a generic num\'{e}raire is an asset, whose nominal price is represented by a strictly positive stochastic process $(B_t)_{t\in[0,+\infty[}$, and which is a portfolio of the original assets $j=0,1,2,\dots,N$. The discounted prices of the original assets are  then represented in terms of the num\'{e}raire by the semimartingales $\hat{S}_t^j:=S_t^j/B_t$.\par We assume that there are no transaction costs and that short sales are allowed. Remark that the absence of transaction costs can be a serious limitation for a realistic model. The filtration $\mathcal{A}$ is not necessarily generated by the price process $(S_t)_{t\in[0,+\infty[}$: other sources of information than prices are allowed. All agents have access to the same information structure, that is to the filtration $\mathcal{A}$.\par

Let $v$ be a positive real number. A $v$-admissible \textbf{strategy} $x=(x_t)_{t\in[0,+\infty[}$ is a $S$-integrable predictable process for which the It\^{o} integral $\int_0^tx\cdot dS\ge-v$ a.s. for all $t\ge0$ with $x_0=0$. A strategy is admissible if it is $v$-admissible for some $v\ge0$.

\begin{definition}[\textbf{Arbitrage}] Let the process $(S_t)_{[0,+\infty[}$ be a semimartingale and $(x_t)_{t\in[0,+\infty[}$ be admissible self-financing strategy. Let us consider trading up to time $T\le\infty$. The portfolio wealth at time $t$ is given by $V_{t}(x):=V_0+\int_0^tx_u\cdot dS_u$, and we denote by $K_0$ the subset of $L^0(\Omega, \mathcal{A}_{T},\mathbb{P})$ containing all such $V_T(x)$, where $x$ is any admissible self-financing strategy.
We define
\begin{itemize}
\item $C_0:=K_0-L_+^0(\Omega, \mathcal{A}_{T},\mathbb{P})$.
\item $C:=C_0\cap L^{\infty}(\Omega, \mathcal{A}_{T},\mathbb{P})$.
\item $\bar{C}$: the closure of $C$ in $L^{\infty}$ with respect to the norm topology.
\item $\mathcal{V}^{V_0}:=\left\{(V_{t})_{t\in[0,+\infty[}\,\big{|}\, V_t=V_t(x), \,\text{where } x \text{ is } V_0\text{-admissible} \right\}$.
\item $\mathcal{V}_T^{V_0}:=\left\{V_T\,\big{|}\,(V_{t})_{t\in[0,+\infty[}\in\mathcal{V}^{V_0}\right\}$:  terminal wealth for $V_0$-admissible self-financing strategies.
\end{itemize}
And let $L_+^{\infty}(\Omega, \mathcal{A}_{T},\mathbb{P})$ be the set of positive random variables in $L^{\infty}(\Omega, \mathcal{A}_{T},\mathbb{P})$.
We say that $S$ satisfies
\begin{itemize}
\item \textbf{(NA), no arbitrage}, if and only if $C \cap L_+^{\infty}(\Omega, \mathcal{A}_{T},\mathbb{P})=\{0\}$.
\item \textbf{(NFLVR), no-free-lunch-with-vanishing-risk},  if and only if $\bar{C} \cap L_+^{\infty}(\Omega, \mathcal{A}_{T},\mathbb{P})=\{0\}$.
\item \textbf{(NUPBR), no-unbounded-profit-with-bounded-risk}, if and only if $\mathcal{V}_T^{V_0}$ is bounded in $L^0$ for some $V_0>0$.
\end{itemize}
\end{definition}

\noindent The relationship between these three different types of arbitrage has been elucidated in \cite{DeSc94} and in \cite{Ka97} with the proof of the following result.
\begin{theorem}
\begin{equation*}
(NFLVR) \Leftrightarrow (NA) + (NUPBR).
\end{equation*}
\end{theorem}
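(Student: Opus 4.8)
The plan is to prove the two implications separately. The forward implication $(NFLVR)\Rightarrow (NA)+(NUPBR)$ is essentially soft, whereas the converse $(NA)+(NUPBR)\Rightarrow (NFLVR)$ carries the real analytic content and rests on a compactness argument in $L^0$ combined with a closedness property of the cone of attainable claims.

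For the forward direction, $(NFLVR)\Rightarrow(NA)$ is immediate: since $C\subseteq\bar C$, the hypothesis that $\bar C$ contains no nonzero positive claim forces the same for $C$. To obtain $(NFLVR)\Rightarrow(NUPBR)$ I would argue by contraposition. If $\mathcal V_T^{V_0}$ is unbounded in $L^0$ (take $V_0=1$), then it is not bounded in probability, so there exist $\alpha>0$ and $1$-admissible terminal wealths $V^n=1+\int_0^T x^n\cdot dS\ge 0$ with $\mathbb P[V^n>n]>\alpha$. Rescaling, $x^n/n$ is $(1/n)$-admissible and its gain $G^n:=\int_0^T (x^n/n)\cdot dS\ge -1/n$ has vanishing downside while still satisfying $\mathbb P[G^n>1-1/n]>\alpha$. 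Truncating and passing to positive parts produces bounded claims lying within $L^\infty$-distance $1/n\to 0$ of $C_0$ whose positive mass stays bounded away from $0$; extracting a limit (weak-$*$, or via convex combinations converging almost surely) yields a nonzero positive element of $\bar C$, contradicting $(NFLVR)$.

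For the converse I assume $(NA)$ and $(NUPBR)$ and take any nonzero positive $f\in\bar C$, with $f_n\in C$ and $\|f_n-f\|_\infty\to 0$. Writing $f_n=g_n-h_n$ with $g_n\in K_0$ and $h_n\ge 0$, the inequality $f\ge 0$ gives $f_n^-\le\varepsilon_n:=\|f_n-f\|_\infty\to 0$, so $g_n\ge f_n\ge-\varepsilon_n$ and (after the usual reduction to uniformly small admissibility) the dominating gains inherit a downside controlled by $\varepsilon_n\to 0$. Consequently $V_0+g_n\in\mathcal V_T^{V_0}$ once $\varepsilon_n\le V_0$, so $(NUPBR)$ forces $\{g_n\}$ to be bounded in $L^0$. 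I would then invoke the Komlós-type lemma of Delbaen--Schachermayer: from a sequence bounded in $L^0$ one extracts convex combinations $\tilde g_n\in\mathrm{conv}(g_n,g_{n+1},\dots)$ converging almost surely to a finite-valued $\tilde g$. Applying the same convex weights to $f_n$ gives $\tilde f_n\to f$ in $L^\infty$ with $\tilde f_n\le\tilde g_n$, hence $0\le f\le\tilde g$ almost surely. A closedness property of the cone of super-replicable claims then upgrades $\tilde g$ to being dominated by the terminal gain of a genuinely $0$-admissible strategy, so that $f\wedge 1$ is a bounded positive element of $C_0$. By $(NA)$ this forces $f\wedge 1=0$, whence $f=0$ and $(NFLVR)$ holds.

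I expect the decisive obstacle to be this last closedness step, namely showing that the almost-sure limit $\tilde g$ of the convex combinations of admissible gains is again controlled by an admissible gain. This is not a soft measure-theoretic fact: it requires the closedness of the space of stochastic integrals $\{\int x\cdot dS\}$ in the semimartingale setting (in the spirit of Memin's theorem and the Ansel--Stricker results), and it is precisely here that the boundedness in $L^0$ supplied by $(NUPBR)$ is indispensable, both to guarantee that $\tilde g$ is finite almost surely and to prevent the approximating gains from escaping to infinity. The remaining manipulations --- truncations, passage to positive parts, and the bookkeeping reconciling terminal-risk control with path-wise admissibility constants --- are routine once this compactness-and-closedness core is in place.
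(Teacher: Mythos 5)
The paper itself offers no proof of this theorem; it is quoted from \cite{DeSc94} and \cite{Ka97}, so your attempt must be judged against those arguments. Your forward direction follows the standard line but contains a topological slip at its final step: $\bar{C}$ is the closure of $C$ in the \emph{norm} topology of $L^{\infty}$, whereas both devices you offer for ``extracting a limit'' (a weak-$*$ limit, or convex combinations converging almost surely) only place the limit in the weak-$*$ closure, respectively produce an almost-sure limit; neither lands in $\bar{C}$, so the contradiction with (NFLVR) is not actually reached. The step is repairable: take convex combinations $k_n$ of your truncated claims converging a.s.\ to $k$ (so $k\ge 0$, $\mathbb{E}[k]\ge\alpha$, $|k|\le 1$), apply Egorov to get a set $A$ with $P[A^c]<\alpha$ on which convergence is uniform, and note that $k_n\mathbf{1}_A-1/n$ is dominated by $k_n$, hence lies in $C_0\cap L^{\infty}$, and converges to $k\mathbf{1}_A$ in $L^{\infty}$-norm; then $k\mathbf{1}_A$ is a nonzero positive element of $\bar{C}$. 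This is a minor, fixable gap.

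The serious gap is exactly the step you flag yourself in the converse: upgrading the a.s.\ limit $\tilde g$ of convex combinations of admissible gains to something dominated by an admissible gain. This Fatou-type closedness of the attainable claims is \emph{not} delivered by Memin's theorem or by Ansel--Stricker (those give closedness of spaces of stochastic integrals in the semimartingale topology, with no control of admissibility or of domination); under your standing hypotheses it is essentially the technical core of Section 4 of \cite{DeSc94}, i.e.\ far deeper than the equivalence you are proving, and your argument is incomplete without it. Moreover the entire detour is unnecessary. Once you have used (NA) to conclude (via the standard Proposition 3.5 of \cite{DeSc94}, your ``usual reduction'') that the dominating gains $g_n\ge f_n\ge f-\varepsilon_n$ come from $\varepsilon_n$-admissible strategies $x^n$ with $\varepsilon_n\to 0$, do not stop at ``$\{g_n\}$ is bounded in $L^0$''; instead, scale. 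Pick $\lambda_n:=V_0/\max(\varepsilon_n,1/n)\to\infty$, so that $\lambda_n x^n$ is $V_0$-admissible and $V_0+\lambda_n g_n\in\mathcal{V}_T^{V_0}$, while on the event $\{f\ge\gamma\}$ (of positive probability for some $\gamma>0$, since $f\ge 0$, $f\neq 0$) one has $\lambda_n g_n\ge\lambda_n(\gamma-\varepsilon_n)\to+\infty$. Hence $\mathcal{V}_T^{V_0}$ is unbounded in $L^0$ (by positive homogeneity the choice of $V_0$ is immaterial), contradicting (NUPBR) directly --- no Koml\'os lemma and no closedness theorem are needed. This scaling argument is precisely Kabanov's proof \cite{Ka97}, which is what the paper cites.
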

\begin{remark} We recall that, as shown in \cite{DeSc94, Ka97, KaKr94, KaKa07}, (NUPBR) is equivalent to (NAA1), i.e.
no asymptotic arbitrage of the $1$st kind , and equivalent to (NA1), i.e. no arbitrage of the 1st kind.
\end{remark}

\subsection{Geometric Reformulation of the Market Model: Primitives}
We are going to introduce a more general representation of the market model introduced in Section \ref{StochasticPrelude}, which better suits to the arbitrage modeling task.
\begin{definition}\label{defi1}
A \textbf{gauge} is an ordered pair of two $\mathcal{A}$-adapted real valued semimartingales $(D, P)$, where $D=(D_t)_{t\ge0}:[0,+\infty[\times\Omega\rightarrow\mathbb{R}$ is called \textbf{deflator} and $P=(P_{t,s})_{t,s}:\mathcal{T}\times\Omega\rightarrow\mathbb{R}$, which is called \textbf{term structure}, is considered as a stochastic process with respect to the time $t$, termed \textbf{valuation date} and $\mathcal{T}:=\{(t,s)\in[0,+\infty[^2\,|\,s\ge t\}$. The parameter $s\ge t$ is referred as \textbf{maturity date}. The following properties must be satisfied almost surely for all $t, s$ such that $s\ge t\ge 0$; $P_{t,s}>0, P_{t,t}=1$.
\end{definition}

Deflators and term structures can be considered \textit{outside the context of fixed income.} An arbitrary financial instrument is mapped to a gauge $(D, P)$ with the following economic interpretation:
\begin{itemize}
\item Deflator: $D_t$ is the value of the financial instrument at time $t$ expressed in terms of some num\'{e}raire. If we choose the cash account, the $0$-th asset as num\'{e}raire, then we can set $D_t^j:=\hat{S}_t^j=\frac{S_t^j}{S_t^0}\quad(j=1,\dots N)$.
\item Term structure: $P_{t,s}$ is the value at time $t$ (expressed in units of deflator at time $t$) of a synthetic zero coupon bond with maturity $s$ delivering one unit of financial instrument at time $s$. It represents a term structure of forward prices with respect to the chosen num\'{e}raire.
\end{itemize}
\noindent We point out that there is no unique choice for deflators and term structures describing an asset model. For example, if a set of deflators qualifies, then we can multiply every deflator  by the same positive semimartingale to obtain another suitable set of deflators. Of course term structures have to be modified accordingly. The term ``deflator" is clearly inspired by actuarial mathematics and  was first introduced in \cite{SmSp98}. In the present context it refers to an asset value up division by a strictly positive semimartingale (which can be the state price deflator if this exists and it is made to the num\'{e}raire). There is no need to assume that a deflator is a positive process. However, if we want to make an asset to our num\'{e}raire, then we have to make sure that the corresponding deflator is a strictly positive stochastic process.

\subsection{Geometric Reformulation of the Market Model: Portfolios}\label{trans}
We want now to introduce transforms of deflators and term structures in order to group gauges containing the same (or less) stochastic information. That for, we will consider \textit{deterministic} linear combinations of assets modelled by the same gauge (e. g. zero bonds of the same credit quality with different maturities).

\begin{definition}\label{gaugeTransforms2}
Let $\pi:[0, +\infty[\longrightarrow \mathbb{R}$ be a deterministic cashflow intensity (possibly generalized) function. It induces a \textbf{gauge transform} $(D,P)\mapsto \pi(D,P):=(D,P)^{\pi}:=(D^{\pi}, P^{\pi})$ by the formulae
\begin{equation*}
 D_t^{\pi}:=D_t\int_0^{+\infty}dh\,\pi_h P_{t, t+h},\qquad
P_{t,s}^{\pi}:=\frac{\displaystyle \int_0^{+\infty}dh\,\pi_h P_{t, s+h}}{\displaystyle  \int_0^{+\infty}dh\,\pi_h P_{t, t+h}}.
\end{equation*}
\end{definition}
\begin{remark} The cashflow intensity $\pi$ specifies the bond cashflow
structure. The bond value at time $t$ expressed in terms of the
market model  num\'{e}raire is given by $D_t^{\pi}$.  The term
structure of forward prices for the bond future expressed in terms
of the bond current value is given by $P_{t,s}^{\pi}$.
\end{remark}
\begin{proposition}\label{convRef}
Gauge transforms induced by cashflow vectors have the following property:
\begin{equation}
((D,P)^{\pi})^{\nu}= ((D,P)^{\nu})^{\pi} = (D,P)^{\pi\ast\nu},
\label{comm}
\end{equation}
where $\ast$ denotes the convolution product of two cashflow vectors or intensities respectively:
\begin{equation}\label{convdef}
    (\pi\ast\nu)_t:=\int_0^tdh\,\pi_h\nu_{t-h}.
\end{equation}
\end{proposition}
\begin{proof}
%%%%%%% 2018/1/15, 4/22 %%%%%%%%
We can observe that
\begin{align*}
(D_t^{\pi})^{\nu} &= D_t^{\pi} \int_0^{+\infty}dh\, \nu_{h} P_{t,t+h}^{\pi} = D_t \int_0^{+\infty}dh\, \nu_h \int_0^{+\infty}du\,\pi_u P_{t,t+h+u}.
\end{align*}
By changing variables $v:=h+u$, one has
\begin{align*}
(D_t^{\pi})^{\nu}   &= D_t \int_0^{+\infty}dv  \Bigl( \int_0^v dh\, \nu_h \pi_{v-h} \Bigl)   P_{t,t+v} =  (D_t)^{\pi\ast\nu}
\end{align*}
and this coincide with $(D_t^{\nu})^{\pi}$, proving the first component of (\ref{comm}). The second component can be derived similarly.
%%%%%%% 2018/1/15, 4/22 %%%%%%%%
\end{proof}

The convolution of two non-invertible gauge transform is non-invertible. The convolution of a non-invertible with an invertible gauge transform is non-invertible.

\begin{definition}\label{int}
The term structure can be written as a functional of the instantaneous forward rate $f$ defined as
\begin{equation*}
  f_{t,s}:=-\frac{\partial}{\partial s}\log P_{t,s},\quad
  P_{t,s}=\exp\left(-\int_t^sdhf_{t,h}\right),
\end{equation*}
\noindent and
\begin{equation}
 r_t:=\lim_{s\rightarrow t^+}f_{t,s}
\end{equation}
\noindent is termed short rate.
\end{definition}
\begin{remark}
Since $(P_{t,s})_{t,s}$ is a $t$-stochastic process (semimartingale) depending on a parameter $s\ge t$, the $s$-derivative can be defined deterministically, and the expressions above make sense pathwise in a both classical and generalized sense. In a generalized sense we will always have a $\mathcal{D}^{\prime}$ derivative for any $\omega\in \Omega$; this corresponds to a classic $s$-continuous  derivative if $P_{t,s}(\omega)$ is a $C^1$-function of $s$ for any fixed $t\ge0$ and $\omega\in\Omega$.
\end{remark}
\begin{remark} The special choice of vanishing interest rate $r\equiv0$ or flat term structure $P\equiv1$ for all assets corresponds to the classical model, where only asset prices and their dynamics are relevant.
\end{remark}

\subsection{Arbitrage Theory in a Differential Geometric Framework}\label{foundations} Now we are in the position to rephrase the asset model presented in Subsection \ref{StochasticPrelude} in terms of a natural geometric language. Given $N$ base assets we want to construct a portfolio theory and study arbitrage and thus we cannot a priori assume the existence of a risk neutral measure or of a state price deflator. In terms of differential geometry, we will adopt the mathematician's and not the physicist's approach. The market model is seen as a principal fibre bundle of the (deflator, term structure) pairs, discounting and portfolio rebalance (or foreign exchange) as a parallel transport, num\'{e}raire as global section of the gauge bundle, arbitrage as curvature.  The no-unbounded-profit-with-bounded-risk condition is proved to imply a zero curvature condition.

\subsubsection{Market Model as Principal Fibre Bundle}
Let us consider -in continuous time- a market with $N$ assets and a num\'{e}raire. A general portfolio at time $t$ is described by the vector of nominals $x\in X$, for an open set $X\subset\mathbb{R}^N$. By nominals $x^1,\dots,x^N$ we mean the number of assets that we hold in our portfolio. Following Definition \ref{defi1}, the asset model consisting in $N$ synthetic zero bonds is described by means of the gauges
\begin{equation*}(D^j,P^j)=((D_t^j)_{t\in[0, +\infty[},(P_{t,s}^j)_{s\ge t}),\end{equation*}
\noindent where $D^j$ denotes the deflator and $P^j$ the term structure for $j=1,\dots,N$.
More exactly: $D_t^j$ is the value of the $j$-th financial instrument at time $t$ expressed in terms of some num\'{e}raire, and
$P_{t,s}^j$ is the value at time $t$ (expressed in units of deflator $D_t^j$ at time $t$ ) of the $j$-th synthetic zero coupon bond with maturity $s$ delivering one unit of financial instrument at time $s$. \\
The term structure  can be written as
\begin{equation*}P_{t,s}^j=\exp\left(-\int_t^sf^j_{t,u}du\right),\end{equation*}
where $f^j$ is the instantaneous forward rate process for the $j$-th asset and the corresponding short rate is given by $r_t^j:=\lim_{u\rightarrow t^+}f^j_{t,u}$. For a portfolio with nominals $x\in X\subset\mathbb{R}^N$ we define
\begin{equation*}
D_t^x:=\sum_{j=1}^Nx_jD_t^j\quad
f_{t,u}^x:=\sum_{j=1}^N\frac{x_jD_t^j}{\sum_{k=1}^Nx_kD_t^k}f_{t,u}^j\quad
P_{t,s}^x:=\exp\left(-\int_t^sf^x_{t,u}du\right).
\end{equation*}
The short rate writes
\begin{equation*}
r_t^x:=\lim_{u\rightarrow t^+}f^x_{t,u}=\sum_{j=1}^N\frac{x_jD_t^j}{\sum_{k=1}^Nx_kD_t^k}r_t^j.
\end{equation*}
The image space of all possible strategies reads
\begin{equation*}M:=\{(x,t)\in X\times[0, +\infty[\}.\end{equation*}
In Subsection \ref{trans} cashflow intensities and the corresponding gauge transforms were introduced. They have the structure of an Abelian semigroup
\begin{equation*}
 H:=\mathcal{E}^{\prime}([0, +\infty[,\mathbb{R})=\{F\in\mathcal{D}^{\prime}([0,+\infty[)\mid \text{supp}(F)\subset[0, +\infty[\text{ is compact}\},
\end{equation*}
where the semigroup operation on distributions with compact support is the convolution (see \cite{Ho03}, Chapter IV), which extends the convolution of regular functions as defined by formula (\ref{convdef}).
\begin{definition}
The \textbf{Market Fibre Bundle} is defined as the fibre bundle of gauges
\begin{equation*}
\mathcal{B}:=\{ ({D^x_t},{P^x_{t,\,\cdot}})^{\pi }|\,(x,t)\in M, \pi\in G\}.
\end{equation*}
\end{definition}
\noindent The cashflow intensities defining invertible transforms constitute an Abelian group
\begin{equation*}
G:=\{\pi\in H |\text{ it exists } \nu\in H\text{ such that }\pi\ast\nu=\delta \}\subset \mathcal{E}^{\prime}([0, +\infty[,\mathbb{R}).
\end{equation*}
where $\delta $ is Dirac delta function acts as an identity element. From Proposition \ref{convRef} we obtain
\begin{theorem} The market fibre bundle $\mathcal{B}$ has the structure of a $G$-principal fibre bundle given by the action
\begin{equation*}
\begin{split}
\mathcal{B}\times G &\longrightarrow\mathcal{B}\\
 ((D,P), \pi)&\mapsto (D,P)^{\pi}=(D^{\pi},P^{\pi})
\end{split}
\end{equation*}
\noindent The group $G$ acts freely and differentiably on $\mathcal{B}$ to the right.
\end{theorem}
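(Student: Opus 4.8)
The plan is to verify, in turn, the four defining properties of a (right) $G$-principal fibre bundle: that the gauge transform is a genuine right action of the group $(G,\ast)$, that this action is differentiable, that it is free, and that the quotient $\mathcal{B}/G$ recovers the base $M$ through a locally trivial projection. The algebraic backbone is already supplied by Proposition~\ref{conv}, so most of the work is structural rather than computational.

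First I would establish the action axioms. The compatibility relation is exactly the content of Proposition~\ref{conv}: since $((D,P)^{\pi})^{\nu}=(D,P)^{\pi\ast\nu}$ and $\ast$ is the semigroup operation on $G$, composition of gauge transforms realises the group law. The identity axiom is a direct computation with $\pi=\delta$: from the defining formulae one gets
\begin{equation}
D_t^{\delta}=D_t\int_0^{+\infty}dh\,\delta_h P_{t,t+h}=D_t P_{t,t}=D_t,\qquad P_{t,s}^{\delta}=\frac{P_{t,s}}{P_{t,t}}=P_{t,s},
\end{equation}
so $(D,P)^{\delta}=(D,P)$. Because $G$ consists precisely of the convolution-invertible intensities, each $\pi\in G$ acts as a bijection of $\mathcal{B}$ with inverse induced by its convolution inverse. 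Differentiability then follows from the explicit formulae: both coordinates are built from the bilinear pairing $(\pi,P)\mapsto\int_0^{+\infty}dh\,\pi_h P_{t,\cdot+h}$ between the compactly supported distributions $\mathcal{E}'([0,+\infty[)$ and the maturity-smooth term structure, followed by division by the strictly positive normaliser $\int_0^{+\infty}dh\,\pi_h P_{t,t+h}$; since convolution is a continuous bilinear (hence smooth) operation on $\mathcal{E}'$ and the normaliser never vanishes, the action is differentiable in the appropriate sense on the infinite-dimensional total space.

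Freeness is the step I expect to carry the real weight. It is equivalent to the triviality of every stabiliser, i.e.\ to the implication $(D,P)^{\pi}=(D,P)\Rightarrow\pi=\delta$. Equating deflators forces the normaliser $\int_0^{+\infty}dh\,\pi_h P_{t,t+h}=1$, and then equating term structures yields the functional equation $\int_0^{+\infty}dh\,\pi_h P_{t,s+h}=P_{t,s}$ for all $s\ge t$. Writing $\psi:=\pi-\delta$ this becomes $\int_0^{+\infty}dh\,\psi_h P_{t,s+h}=0$ for every maturity $s\ge t$, so the whole forward-shifted family $\{P_{t,s+\cdot}\}_{s\ge t}$ annihilates $\psi$. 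The crux is to argue that this family is total enough that the only compactly supported distribution it annihilates is $0$; using the forward-rate representation $P_{t,s+h}=P_{t,s}\exp\left(-\int_s^{s+h}f_{t,u}\,du\right)$ one sees this amounts to a non-degeneracy (richness) condition on the forward curve, under which $\psi=0$ and hence $\pi=\delta$. This is precisely where the hypotheses on the term structures must enter, and I would isolate it as a separate lemma; the flat case $f\equiv0$, where every $\psi$ with vanishing Laplace transform at a single point survives, shows the statement can fail without such non-degeneracy, so this point deserves genuine care.

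Finally, granted freeness, the canonical assignment $(x,t)\mapsto(D^x_t,P^x_{t,\cdot})=(D^x_t,P^x_{t,\cdot})^{\delta}$ is a global section of the projection $\mathcal{B}\to M$ that sends a gauge to its base point. Equivariance of this section gives a global trivialisation $M\times G\xrightarrow{\ \sim\ }\mathcal{B}$, $((x,t),\pi)\mapsto(D^x_t,P^x_{t,\cdot})^{\pi}$, which is injective exactly by freeness and surjective onto each fibre by the very definition of $\mathcal{B}$. Smoothness of this map and of its inverse then upgrades the set-theoretic bundle to a differentiable $G$-principal fibre bundle (indeed a trivial one), completing the argument.
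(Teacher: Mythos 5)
Your verification of the action axioms coincides with the paper's entire proof: the theorem is presented as following ``From Proposition \ref{conv}'', i.e.\ from the compatibility $((D,P)^{\pi})^{\nu}=(D,P)^{\pi\ast\nu}$, with the identity axiom left implicit (your computation $D_t^{\delta}=D_tP_{t,t}=D_t$ and $P_{t,s}^{\delta}=P_{t,s}/P_{t,t}=P_{t,s}$ is the right way to make it explicit). Everything else you undertake --- differentiability, freeness, and the global trivialization --- is not proved in the paper at all; the detailed construction is delegated to \cite{Fa15}. So on the overlap you and the paper agree, and the substance of your proposal lies beyond that overlap.

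That substance contains the one genuine gap, and it is a gap you yourself diagnose: freeness is never actually established. Your reduction of the stabilizer equation is correct --- $(D,P)^{\pi}=(D,P)$ forces $\int_0^{+\infty}dh\,\pi_hP_{t,t+h}=1$ and then $\int_0^{+\infty}dh\,\pi_hP_{t,s+h}=P_{t,s}$ for all $s\ge t$ --- and your flat-case observation is decisive rather than incidental: the paper explicitly admits $P\equiv1$ as ``the classical model'' (second Remark after Definition \ref{int}), and there every intensity of unit total mass stabilizes the gauge. Hence, for the gauge group the paper intends (one large enough to have Lie algebra $\mathbb{R}^{[0,+\infty[}$, as asserted in the curvature subsection), the action is not free on such gauges, and the theorem cannot hold without a non-degeneracy hypothesis on the term structures; your ``separate lemma'' is not a routine step you postponed but an additional assumption that must be imposed. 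One caveat to your counterexample: if $G$ is read literally as the group of units of the convolution algebra $\mathcal{E}^{\prime}([0,+\infty[,\mathbb{R})$, a Paley--Wiener/Hadamard argument shows $G=\{c\delta\,|\,c\neq0\}$; for that group freeness holds trivially (the action merely rescales the deflator), but then the fibre is one-dimensional, contradicting the paper's identification of the vertical space with $\mathbb{R}^{[0,+\infty[}$. Either way, the theorem does not follow ``from Proposition \ref{conv}'' alone, which is exactly the point your analysis exposes. A secondary, smaller gap: injectivity of your global trivialization $M\times G\rightarrow\mathcal{B}$ requires, besides freeness, that orbits over distinct $(x,t)$ be disjoint, which is a further unproved (and in degenerate cases false) assumption.
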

The market fibre bundle repackages all the information concerning market dynamics of the asset futures and their underlyings. The principal bundle structure reflects the portfolio construction possibilities at a fixed time, as well as the synthetic bond construction possibilities for given cash flow patterns specified by the gauge transforms.

\subsubsection{Nelson Weak $\mathcal{D}$-Differentiable Market Model} We continue to reformulate the classic asset model introduced in Subsection \ref{StochasticPrelude} in terms of stochastic differential geometry.

\begin{definition}\label{weakMM}
 A \textbf{Nelson weak $\mathcal{D}$-differentiable market model} for $N$ assets is described by $N$ gauges which are Nelson weak $\mathcal{D}$-differentiable with respect to the time variable. More exactly, for all $t\in[0,+\infty[$ and $s\ge t$ there is an open time interval $I\ni t$ such that for the deflators $D_t:=[D_t^1,\dots,D_t^N]^{\dagger}$ and the term structures $P_{t,s}:=[P_{t,s}^1,\dots,P_{t,s}^N]^{\dagger}$, the latter seen as processes in $t$ and parameter $s$, there exist a weak $\mathcal{D}$-derivative with respect to the time variable $t$ (see Appendix \ref{Derivatives}). The short rates are defined by $r_t:=\lim_{s\rightarrow t^{+}}\frac{\partial}{\partial s}\log P_{t,s}$.\par
A strategy is a curve $\gamma:I\rightarrow X$ in the portfolio space parameterized by the time. This means that the allocation at time $t$ is given by the vector of nominals $x_t:=\gamma(t)$. We denote by $\bar{\gamma}$ the lift of $\gamma$ to $M$, that is $\bar{\gamma}(t):=(\gamma(t),t)$. A strategy is said to be \textbf{closed} if it represented by a closed curve.  A \textbf{weak $\mathcal{D}$-admissible strategy} is predictable and weak $\mathcal{D}$-differentiable.
\end{definition}
\begin{remark}
We require weak $\mathcal{D}$-differentiability and not strong $\mathcal{D}$-differentiability because imposing a priori regularity properties on the trading strategies corresponds to restricting the class of admissible strategies with respect to the classical notion of Delbaen and Schachermayer.
Every (no-)arbitrage consideration depends crucially on the chosen definition of admissibility.
Therefore, restricting the class of admissible strategies may lead to the automatic exclusion of potential arbitrage opportunities, leading to vacuous statements for kinds of
Fundamental Theorem of Asset Pricing.
An admissible strategy in the classic sense (see Section \ref{section2}) is weak $\mathcal{D}$-differentiable.
\end{remark}
\noindent In general the allocation can depend on the state of the nature i.e. $x_t=x_t(\omega)$ for $\omega\in\Omega$.
\begin{proposition}
A weak $\mathcal{D}$-admissible strategy is self-financing if and only if
\begin{equation}\label{sf}
\mathcal{D}(x_t\cdot D_t)=x_t\cdot \mathcal{D}D_t-\frac{1}{2}\mathfrak{D}_*\left<x,D\right>_t\text{ or }
\mathcal{D}x_t\cdot D_t=-\frac{1}{2}\mathfrak{D}_*\left<x,D\right>_t\text{ or }
\mathfrak{D}x_t\cdot D_t=0,
\end{equation}
almost surely. The bracket $\left<\cdot,\cdot\right>$ denotes the continuous part of the quadratic covariation.
\end{proposition}
\begin{proof}
The strategy is self-financing if and only if
\begin{equation*}
x_t\cdot D_t =x_0 \cdot D_0 + \int_0^t x_u \cdot dD_u,
\end{equation*}
which is, symbolizing It\^{o}'s differential $d$, equivalent to
\begin{equation}\label{53}
\mathfrak{D}(x_t\cdot D_t) = x_t\cdot \mathfrak{D}D_t,
\end{equation}
or equivalent to
\begin{equation}\label{54}
\mathfrak{D}x_t\cdot D_t = 0.
\end{equation}

The self-financing condition can be expressed by means of the anticipative differential $d_*$ as
\begin{equation*}
x_t\cdot D_t = x_0\cdot D_0 + \int_0^t x_u\cdot d_* D_u - \int_0^t d\left<x,D\right>_u,
\end{equation*}
which is equivalent to
\begin{equation}
\mathfrak{D}_* (x_t\cdot D_t) = x_t\cdot \mathfrak{D}_* D_t - \mathfrak{D}_* \left<x,D\right>_t.
\label{55}
\end{equation}
By summing equations (\ref{53}) and (\ref{55}) we obtain
\begin{equation*}
\mathcal{D} (x_t\cdot D_t)=\frac{1}{2}(\mathfrak{D}+\mathfrak{D}_*)(x_t\cdot D_t) = x_t\cdot \mathcal{D}D_t - \frac{1}{2}\mathfrak{D}_* \left<x,D\right>_t.
\end{equation*}
To prove the second statement in expression (\ref{sf}) we consider the integration by parts formula for It\^{o}'s integral
\begin{equation*}
\int_0^t x_u\cdot dD_u + \int_0^t D_u\cdot dx_u = x_t \cdot D_t - x_0\cdot D_0 - \left<x,D\right>_t,
\end{equation*}
which, expressed in terms of Stratonovich's integral, leads to
\begin{equation*}
\int_0^t x_u \circ dD_u - \frac{1}{2} \left<x,D\right>_t + \int_0^t D_u\circ dx_u -\frac{1}{2}\left<x,D\right>_t = x_t\cdot D_t - x_0\cdot D_0 -\left<x,D\right>_t.
\end{equation*}
By taking Stratonovich's derivative $\mathcal{D}$ on both side we get
\begin{equation*}
\mathcal{D}(x_t\cdot D_t) = \mathcal{D}x_t\cdot D_t + x_t\cdot \mathcal{D}D_t,
\end{equation*}
which, together with the first statement in expression (\ref{sf}) proves the second one.
\end{proof}

For the reminder of this paper unless otherwise stated we will deal only with weak $\mathcal{D}$-differentiable market models, weak $\mathcal{D}$-differentiable strategies, and, when necessary, with weak $\mathcal{D}$-differentiable state price deflators. All It\^{o} processes are weak $\mathcal{D}$-differentiable, so that the class of considered admissible strategies is very large.

\subsubsection{Arbitrage as Curvature}
The Lie algebra of $G$ is the function space of all real valued functions on $[0, +\infty [$ denoted by
\begin{equation*}\mathfrak{g}=\mathbb{R}^{[0, +\infty[}\end{equation*}
and therefore commutative.
Following Ilinski's idea proposed in \cite{Il01}, we motivate the choice of a particular $\mathfrak{g}$-valued connection $1$-form by the fact that it allows to encode portfolio rebalance (or foreign exchange) and discounting as parallel transport.
\begin{theorem}With the choice of connection
\begin{equation}\label{connection}\chi(x,t,g).(\delta x, \delta t):= \left(\frac{D_t^{\delta x}}{D_t^x}-r_t^x\delta t\right) g,\end{equation}
the stochastic parallel transport in $\mathcal{B}$ has the following financial interpretations:
\begin{itemize}
\item Parallel transport along the nominal directions ($x$-lines) corresponds to a multiplication by an exchange rate.
\item Parallel transport along the time direction ($t$-line) corresponds to a division by a stochastic discount factor.
\end{itemize}
\end{theorem}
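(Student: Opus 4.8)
The plan is to compute the horizontal lift determined by the connection $\chi$ separately in the two coordinate directions of the base $M=X\times[0,+\infty[$ and then read off the financial meaning of each. Recall that parallel transport of a fibre element along a curve in $M$ is, by definition, its horizontal lift: the curve $s\mapsto(x(s),t(s),g(s))$ in $\mathcal{B}$ whose tangent is annihilated by the vertical projection. Since $\Pi^v_{(x,t,g)}(\delta x,\delta t,\delta g)=(0,0,\delta g+\Gamma(x,t,g).(\delta x,\delta t))$, with the coefficient $\Gamma$ now taken to be the connection $\chi$ of (\ref{connection}), horizontality amounts to $\delta g=-\chi(x,t,g).(\delta x,\delta t)$, i.e. the linear (Stratonovich) transport equation
\begin{equation*}
\mathcal{D}g=-\left(\frac{D_t^{\delta x}}{D_t^x}-r_t^x\,\delta t\right)g .
\end{equation*}
Existence and uniqueness of the lift are granted by the stochastic parallel transport theory cited above. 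Since $\chi$ takes values in the one-parameter subgroup of scalar multiples of $\delta$, the element $g$ may be treated as a scalar and the equation is a genuine scalar linear ODE/SDE, which I solve along the $x$-lines and along the $t$-line in turn.

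First I treat the nominal directions, fixing $t$ and setting $\delta t=0$, so the equation reduces to $\mathcal{D}g=-(D_t^{\delta x}/D_t^x)g$. Because $x\mapsto D_t^x=\sum_j x_jD_t^j$ is linear, along any path $x(\cdot)$ joining $x_0$ to $x_1$ one has $D_t^{\delta x}=\mathrm{d}\!\left(D_t^x\right)$, so the equation becomes $\mathrm{d}\log g=-\mathrm{d}\log D_t^x$. At the frozen random value of the deflators $D_t^j(\omega)$ this is a pathwise ODE, and integrating yields
\begin{equation*}
g_1=g_0\cdot\frac{D_t^{x_0}}{D_t^{x_1}} .
\end{equation*}
Hence parallel transport multiplies the fibre element by the ratio $D_t^{x_0}/D_t^{x_1}$ of the two portfolio values at time $t$, which is exactly the value-for-value exchange rate converting one unit of portfolio $x_0$ into units of portfolio $x_1$. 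I would also note that, since only the endpoint values of $\log D_t^x$ enter, the result is independent of the chosen path in $X$.

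Next I treat the time direction, fixing $x$ and setting $\delta x=0$. Since $D_t^{0}=0$, the transport equation collapses to $\mathcal{D}g=r_t^x\,g\,\delta t$. Here the integrator $r_u^x\,\mathrm{d}u$ has finite variation in $t$, so the Stratonovich equation carries no It\^{o} correction and coincides with the ordinary linear ODE, whose solution is
\begin{equation*}
g_t=g_0\,\exp\!\left(\int_0^t r_u^x\,\mathrm{d}u\right).
\end{equation*}
Recognising $\exp\!\left(-\int_0^t r_u^x\,\mathrm{d}u\right)$ as the stochastic discount factor attached to the portfolio $x$, this says that parallel transport in time divides the fibre element by that discount factor, as claimed.

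The hard part is not the two computations, which are short, but making the statement rigorous as a stochastic object: justifying that the horizontal lift along the (deterministic) base curve is well defined when the connection coefficients $D_t^x$ and $r_t^x$ are only semimartingales. This is precisely where the cited Stratonovich construction of stochastic parallel transport is invoked. The remaining care is bookkeeping: one needs $D_t^x>0$ along the transported path so that $\log D_t^x$ and the exchange rate are meaningful (guaranteed since the deflators lie in admissible num\'{e}raire directions), together with local integrability of $r^x$ so that $\int_0^t r_u^x\,\mathrm{d}u$ exists. Given these, the passage from the abstract transport equation to the two explicit formulae is immediate.
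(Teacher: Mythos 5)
Your proposal is correct. Note, however, that the paper itself gives no proof of this theorem: it simply defers to Theorem~28 in \cite{Fa15}, so there is nothing in-paper to compare your argument against line by line. What you supply is the natural self-contained argument, and it is the one the cited reference carries out in essentially the same way: impose horizontality $\delta g=-\chi(x,t,g).(\delta x,\delta t)$, then integrate the resulting scalar linear equation separately along the $x$-lines and the $t$-line. Your two computations are sound: along nominal directions, linearity of $x\mapsto D_t^x$ gives $D_t^{\delta x}=d\left(D_t^x\right)$, hence $g_1=g_0\,D_t^{x_0}/D_t^{x_1}$ (multiplication by the exchange rate, path-independently, as you correctly observe); along the time line, $\delta x=0$ kills the first term and the sign flip yields $\mathcal{D}g_t=r_t^x g_t$, hence $g_t=g_0\exp\left(\int_0^t r_u^x\,du\right)$, i.e.\ division by the stochastic discount factor. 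Two small points deserve tightening. First, your phrase that $\chi$ ``takes values in the one-parameter subgroup of scalar multiples of $\delta$'' is loose: $\chi$ takes values in the commutative Lie algebra $\mathfrak{g}=\mathbb{R}^{[0,+\infty[}$, and the vertical vector is the scalar multiple $\lambda g$ of the current fibre element; it is this, together with commutativity of $G$, that lets you treat the transport equation as a scalar linear ODE whose solution stays on the ray through $g_0$, and it would be worth saying so explicitly. Second, the logarithm is a convenience, not a necessity: integrating $\delta g/g=-\delta(D_t^x)/D_t^x$ directly only requires $D_t^x\neq 0$ along the path, which slightly weakens the positivity bookkeeping you invoke at the end.
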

\begin{proof}
We refer to Theorem 28 in \cite{Fa15}.
\end{proof}

Recall that time derivatives needed to define the parallel transport along the time lines have to be understood in Stratonovich's sense. We see that the bundle is trivial, because it has a global trivialization, but the connection is not trivial.
The connection $\chi$ writes as a linear combination of basis differential forms as
\begin{equation}
\chi(x,t,g)=\left(\frac{1}{D_t^x}\sum_{j=1}^ND_t^jdx_j-r_t^xdt\right)g.
\label{1-form}
\end{equation}
The $\mathfrak{g}$-valued curvature $2$-form is defined as
\begin{equation}R:=d\chi+[\chi,\chi],\end{equation} meaning by this, that for all $(x,t,g)\in \mathcal{B}$ and for all $\xi,\eta\in T_{(x,t)}M$
\begin{equation}R(x,t,g)(\xi,\eta):=d\chi(x,t,g)(\xi,\eta)+[\chi(x,t,g)(\xi),\chi(x,t,g)(\eta)]. \end{equation}
Remark that, being the Lie algebra commutative, the Lie bracket $[\cdot,\cdot]$ vanishes. After some calculations we obtain
\begin{equation}R(x,t,g)=\frac{g}{D_t^x}\sum_{j=1}^ND_t^j\left(r_t^x+\mathcal{D}\log(D_t^x)-r_t^j-\mathcal{D}\log(D_t^j)\right)dx_j\wedge dt,\end{equation}
\noindent summarized as the following.
\begin{proposition}[\textbf{Curvature Formula}]\label{curvature}
Let $R$ be the curvature. Then, the following equality holds:
\begin{equation}R(x,t,g)=g dt\wedge d_x\left[\mathcal{D} \log (D_t^x)+r_t^x\right].\end{equation}
\end{proposition}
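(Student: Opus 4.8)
The plan is to use that the structure group $G$ is abelian, so that the curvature collapses to $R=d\chi$, and then to compute the stochastic exterior derivative of the connection $1$-form directly, recognising its $x$-part as an exact spatial differential. Since $D_t^x=\sum_k x_k D_t^k$, the chain rule in the nominal variables gives $\partial_{x_j}\log(D_t^x)=D_t^j/D_t^x$, so that
\begin{equation}
\frac{1}{D_t^x}\sum_{j=1}^N D_t^j\,dx_j=d_x\log(D_t^x),
\end{equation}
where $d_x$ denotes the exterior derivative in the nominals only. Writing the full stochastic exterior derivative as $df=d_xf+(\mathcal{D}f)\,dt$ for a scalar $f$, the connection form \eqref{1-form} becomes $\chi=\omega\,g$ with $\omega:=d_x\log(D_t^x)-r_t^x\,dt$.

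Because $\mathfrak{g}=\mathbb{R}^{[0,+\infty[}$ is commutative, the bracket term vanishes and $R=d\chi$; moreover the curvature is horizontal, so when evaluated on base vectors $\xi,\eta\in T_{(x,t)}M$ the contribution of differentiating the fibre coordinate $g$ does not appear and $R(x,t,g)=g\,d\omega$. It then remains to compute $d\omega=d\big(d_x\log(D_t^x)\big)-d(r_t^x\,dt)$. Substituting $d_x\log(D_t^x)=d\log(D_t^x)-(\mathcal{D}\log(D_t^x))\,dt$ and invoking $d\circ d=0$ on the scalar $\log(D_t^x)$ yields
\begin{equation}
d\big(d_x\log(D_t^x)\big)=-d\big((\mathcal{D}\log(D_t^x))\,dt\big)=dt\wedge d_x\big(\mathcal{D}\log(D_t^x)\big),
\end{equation}
while $d(r_t^x\,dt)=d_x r_t^x\wedge dt=-dt\wedge d_x r_t^x$. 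Adding the two contributions collapses everything into $d\omega=dt\wedge d_x\big[\mathcal{D}\log(D_t^x)+r_t^x\big]$, and multiplying by $g$ gives the asserted identity. Equivalently, this is the assertion that the coefficient of $dx_j\wedge dt$ in the displayed intermediate expression equals $-\partial_{x_j}\big[\mathcal{D}\log(D_t^x)+r_t^x\big]$; one verifies this from $\partial_{x_j}r_t^x=(D_t^j/D_t^x)(r_t^j-r_t^x)$ together with the identity $(D_t^j/D_t^x)\,\mathcal{D}\log(D_t^x/D_t^j)=-\mathcal{D}(D_t^j/D_t^x)$.

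The main obstacle is foundational rather than algebraic: every manipulation above presupposes that the stochastic exterior calculus built on Nelson's symmetric derivative $\mathcal{D}$ obeys the classical rules. Concretely, I must justify that $\mathcal{D}$ satisfies the ordinary chain rule on $\log$ without Itô-type quadratic-variation corrections (this is precisely where the Stratonovich interpretation emphasised earlier is essential), and that the nominal derivative $\partial_{x_j}$ commutes with the time derivative $\mathcal{D}$, namely $\partial_{x_j}\mathcal{D}\log(D_t^x)=\mathcal{D}\partial_{x_j}\log(D_t^x)$, which is what underlies the use of $d\circ d=0$. Since $x$ enters $D_t^x=\sum_k x_k D_t^k$ only as a deterministic parameter, interchanging this smooth parameter-differentiation with the conditional-expectation limit defining $\mathcal{D}$ is plausible but requires a uniform-integrability argument to be fully rigorous; granting it, the computation closes and produces $R(x,t,g)=g\,dt\wedge d_x\big[\mathcal{D}\log(D_t^x)+r_t^x\big]$.
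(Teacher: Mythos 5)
Your proof is correct and follows essentially the same route as the paper's: both reduce $R=d\chi$ via commutativity of $\mathfrak{g}$, identify the connection form as $g\left(d_x\log(D_t^x)-r_t^x\,dt\right)$, and expand the stochastic exterior derivative $d=d_x+\mathcal{D}\,dt$, with your appeal to $d\circ d=0$ being a repackaging of the paper's explicit cancellation of the $dx_i\wedge dx_j$ terms and its interchange of $\partial_{x_j}$ with $\mathcal{D}$. Your closing caveat correctly flags the one step the paper also takes for granted (commuting $\partial_{x_j}$ with Nelson's derivative $\mathcal{D}$), so nothing essential is missing.
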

The curvature represents the capacity of instantaneous arbitrage allowed by the market.
Although the original proof can be found in Proposition 38 in \cite{Fa15}, it is based on the physical concept such as the divergence and the current, which is not so familiar for mathematical finance, here we state afresh more straightforward proof.
\begin{proof}
Since the Lie bracket $[\cdot,\cdot]$ vanishes, and the exterior derivative $d$ acts only for the first term of the right hand side of (\ref{1-form}),
\begin{align*}
   R(x,t,g)  =  d\chi (x,t,g) = g \cdot d\Biggl(  \sum_{i=1}^N \frac{\partial \log (D_t^x)}{\partial x_i} \cdot dx_i - r_t^x dt    \Biggl) .
\end{align*}
We note that the differential $d$ acts as $d=d_x+d_t = d_x + \mathcal{D}$ for the first term, while the differential $d$ acts as $d=d_x$ for the second term $(-r^x_t)dt$ because $dt \wedge dt=0$ as bellow.
\begin{equation*}
d(-r^x_tdt) =d_x(-r^x_tdt) = -\sum_j \frac{\partial }{\partial x_j}r^x_t dx_j \wedge dt = \sum_j \frac{\partial }{\partial x_j} r^x_t dt \wedge dx_j.
\end{equation*}
And then we have
\begin{align*}
   R(x,t,g)   &= g \cdot \Biggl(  \sum_{i<j} \frac{\partial^2}{\partial x_i \partial x_j}
           \Bigl( \log (D_t^x) \Bigl) dx_i \wedge dx_j  \\
           &\quad\quad\quad\quad\quad +  \Biggl(    \sum_{j} \mathcal{D}\frac{\partial \log (D_t^x)}{\partial x_i}
                             + \sum_{j} \frac{\partial }{\partial x_j} r^x_t   \Biggl) dt \wedge dx_j \Biggl) ,
\end{align*}
but the first term vanish because of the anticommutativity of the wedge product $dx_i \wedge dx_j = - dx_j \wedge dx_i$. Rearrange the order of $\frac{\partial }{\partial x_i}$ and $\mathcal{D}$, we can conclude that
\begin{align*}
R(x,t,g)    &=  g \cdot   \sum_{j} \frac{\partial }{\partial x_j} \Bigl( \mathcal{D} \log (D_t^x)   +  r^x_t   \Bigl) dt \wedge dx_j   \\
   &= g \cdot dt \wedge d_x \Bigl(  \mathcal{D} \log (D_t^x)+  r^x_t    \Bigl) .
 \end{align*}
\end{proof}

\noindent We can prove following results which characterizes arbitrage as curvature.
\begin{theorem}[\textbf{No Arbitrage}]\label{holonomy}
The following assertions are equivalent:
\begin{itemize}
\item [(i)] The market model (consisting base assets and futures with discounted prices $D$ and $P$) satisfies the no-free-lunch-with-vanishing-risk condition.
\item[(ii)] There exists a positive local martingale $\beta=(\beta_t)_{t\ge0}$ such that deflators and short rates satisfy for all portfolio nominals and all times the condition
\begin{equation}r_t^x=-\mathcal{D}\log(\beta_tD_t^x).\end{equation}
\item[(iii)] There exists a positive local martingale $\beta=(\beta_t)_{t\ge0}$ such that deflators and term structures satisfy for all portfolio nominals and all times the condition
\begin{equation}P^x_{t,s}=\frac{\mathbb{E}_t[\beta_sD^x_s]}{\beta_tD^x_t}.\end{equation}
\end{itemize}
\end{theorem}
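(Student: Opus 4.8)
The plan is to route all three equivalences through the Fundamental Theorem of Asset Pricing recalled above in the form $(NFLVR)\Leftrightarrow(NA)+(NUPBR)$, together with the Delbaen--Schachermayer characterisation in \cite{DeSc94}, whose financial content is the existence of a \emph{state price deflator}: a strictly positive semimartingale $\beta$ under which the deflated, rolled-up asset values become equivalent (local, resp. $\sigma$-) martingales. I would establish the cycle $(i)\Rightarrow(iii)\Rightarrow(ii)\Rightarrow(i)$. For $(i)\Rightarrow(iii)$ I would invoke the FTAP to extract from $(NFLVR)$ an equivalent martingale measure, equivalently a strictly positive deflator $\beta$. The synthetic zero-coupon bond defining $P^x_{t,s}$ delivers one unit of the portfolio instrument at maturity $s$, worth $D^x_s$; absence of arbitrage between this bond and the underlying forces its time-$t$ value, expressed in units of the deflator, to equal the $\beta$-deflated conditional expectation of the future spot value, so that after normalising by $\beta_t D^x_t$ one obtains precisely $P^x_{t,s}=\mathbb{E}_t[\beta_s D^x_s]/(\beta_t D^x_t)$, with $P^x_{t,t}=1$ and positivity immediate from $\beta>0$.

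For $(iii)\Rightarrow(ii)$ the computation is purely local at the short end. Taking logarithms in $(iii)$ and differentiating in the maturity kills the $s$-independent factor $\beta_t D^x_t$, leaving $\partial_s\log P^x_{t,s}=\partial_s\log\mathbb{E}_t[\beta_s D^x_s]$. Letting $s\to t^+$, the maturity derivative of the conditional expectation is by definition the Nelson \emph{forward} derivative (Appendix A), whence $\lim_{s\to t^+}\partial_s\log P^x_{t,s}=D(\beta_t D^x_t)/(\beta_t D^x_t)$. Since $\mathcal{D}=\tfrac12(D+D_*)$ is the symmetric (Stratonovich) derivative, its It\^o corrections cancel and one has the identity $DY/Y=\mathcal{D}\log Y$ for strictly positive $Y$; applied to $Y=\beta_t D^x_t$ this yields $r^x_t=-\lim_{s\to t^+}\partial_s\log P^x_{t,s}=-\mathcal{D}\log(\beta_t D^x_t)$, which is $(ii)$.

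For $(ii)\Rightarrow(i)$ I would read $(ii)$ as the drift identity $D(\beta_t D^x_t)=-r^x_t\,\beta_t D^x_t$, so that the rolled-up process $\beta_t D^x_t\exp(\int_0^t r^x_u\,du)$ has vanishing forward drift (the finite-variation factor contributes $+r^x_t\beta_t D^x_t$ by Leibniz) and is therefore a local martingale for every nominal $x$. This exhibits $\beta$ as a state price deflator and, via the FTAP, certifies $(NFLVR)$, closing the cycle and in particular recovering $(iii)$ by taking conditional expectations of the rolled-up martingale.

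I expect the main obstacle to lie exactly in this last implication. Condition $(ii)$, read through Proposition \ref{curvature}, only asserts that $\mathcal{D}\log D^x_t+r^x_t=-\mathcal{D}\log\beta_t$ is independent of $x$, i.e.\ that the curvature $R$ vanishes; this is the zero-curvature content and corresponds to $(NUPBR)$. Upgrading the resulting formal positive solution $\beta$ to a genuine density that defines an equivalent measure --- the additional $(NA)$ ingredient in $(NFLVR)\Leftrightarrow(NA)+(NUPBR)$ --- is where the analytic work sits: one must control integrability, pass from the local/$\sigma$-martingale property to the true martingale property needed for the conditional-expectation formula $(iii)$ at every maturity, and rigorously justify the interchange of $\partial_s$ with $\mathbb{E}_t$ and its identification with the Nelson forward derivative. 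The FTAP is precisely the tool that supplies this upgrade, which is why I would anchor the whole argument on it rather than attempt a direct $(ii)\Rightarrow(iii)$ reconstruction from the short rate alone.
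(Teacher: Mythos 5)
The paper offers no in-text argument for Theorem \ref{holonomy}: its ``proof'' is a citation of Theorem 33 in \cite{Fa15}, so your cycle cannot be matched step by step against an internal proof. Judged on its own terms, however, your proposal has a fatal gap at $(ii)\Rightarrow(i)$. From the drift identity you extract only that $\beta_t D^x_t\exp\left(\int_0^t r^x_u\,du\right)$ has vanishing forward drift, i.e.\ that $\beta$ is a strictly positive \emph{local martingale} deflator. But the existence of such a deflator is precisely the (ZC)/(NUPBR) level of no-arbitrage, not (NFLVR): this is the whole point of Section 3 of the paper (Theorem \ref{thm_ZC_equiv}: (ZC) $\Leftrightarrow$ (NUPBR)), and the Bessel-type models of \cite{Fo15} and \cite{FoRu13}, cited in the paper's final remark of Section 2, admit a local martingale deflator while violating (NFLVR). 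The FTAP of \cite{DeSc94} cannot perform the upgrade you assign to it: it converts the existence of an equivalent $\sigma$-martingale measure --- a strictly positive, uniformly integrable, \emph{true} martingale density --- into (NFLVR), and nothing in your argument produces that integrability from the pointwise identity (ii). If your step were valid as written, it would prove (NUPBR) $\Rightarrow$ (NFLVR), which is false. You correctly locate the obstacle in your closing paragraph, but the proposed resolution (``the FTAP is precisely the tool that supplies this upgrade'') is the missing argument, not a proof of it; any correct proof must use the genuine true-martingale content of the conditional-expectation formula (iii), not merely its zero-drift consequence.

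There is a second, independent gap in $(iii)\Rightarrow(ii)$: the identity $DY/Y=\mathcal{D}\log Y$ for strictly positive $Y$ is false in general. The mean derivative obeys the chain rule, $\mathcal{D}\log Y=\mathcal{D}Y/Y$, but $\mathcal{D}Y=\frac{1}{2}(DY+D_*Y)$ differs from the forward derivative $DY$ by the osmotic term; the paper's own Proposition \ref{PropIto} turns on exactly this distinction, since $DW_t=0$ while $\mathcal{D}W_t=W_t/(2t)$ for Brownian motion. The short-end limit $\lim_{s\to t^+}\partial_s\log\mathbb{E}_t[\beta_s D^x_s]$ is a forward-derivative quantity, conditioned moreover on $\mathcal{A}_t$ rather than on Nelson's past $\sigma$-algebra (the assumptions (A1)--(A2) that would reconcile the two conditionings are introduced only in Section 3, after this theorem). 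Identifying this limit with $-\mathcal{D}\log(\beta_t D^x_t)$ is therefore a substantive step that your sketch treats as an automatic cancellation. A lesser but related point: in $(i)\Rightarrow(iii)$ you argue that ``absence of arbitrage forces'' the bond to be priced by the conditional expectation, but in this framework the term structure $P^x_{t,s}$ is model data, not a derived price, so the implication requires exhibiting an explicit free lunch with vanishing risk from any violation of the formula rather than appealing to a pricing principle.
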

\begin{proof}
We refer to Theorem 33 in \cite{Fa15}.
\end{proof}
\noindent This motivates the following definition.
\begin{definition}
The market model satisfies the \textbf{zero curvature (ZC)} if and only if the curvature vanishes a.s.
\end{definition}

\noindent Therefore, we have following implication relying two different definitions of no-abitrage:
\begin{corollary}
\begin{equation*}
\text{(NFLVR)}\Rightarrow \text{(ZC)}.
\end{equation*}
\end{corollary}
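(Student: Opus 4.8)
The plan is to obtain (ZC) as an essentially immediate consequence of the characterization of (NFLVR) supplied by Theorem~\ref{holonomy} together with the Curvature Formula of Proposition~\ref{curvature}, so that no independent stochastic computation is needed. First I would assume that the market model satisfies (NFLVR). Invoking the equivalence (i)$\Leftrightarrow$(ii) of Theorem~\ref{holonomy}, there is then a strictly positive semimartingale $\beta=(\beta_t)_{t\ge0}$ such that, for every portfolio nominal $x$ and every time $t$,
\begin{equation*}
r_t^x=-\mathcal{D}\log(\beta_t D_t^x)=-\mathcal{D}\log\beta_t-\mathcal{D}\log D_t^x,
\end{equation*}
the second equality using the linearity of the Nelson derivative $\mathcal{D}$ and the additivity of the logarithm.

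The central observation is that this identity strips the bracket appearing in the Curvature Formula of all dependence on the portfolio nominals. Indeed, substituting the expression for $r_t^x$ into $\mathcal{D}\log(D_t^x)+r_t^x$ produces
\begin{equation*}
\mathcal{D}\log(D_t^x)+r_t^x=\mathcal{D}\log(D_t^x)-\mathcal{D}\log\beta_t-\mathcal{D}\log(D_t^x)=-\mathcal{D}\log\beta_t,
\end{equation*}
whose right-hand side depends only on $t$ and on the state $\omega$, but not on $x$. Applying the partial exterior derivative $d_x$ in the portfolio directions therefore annihilates it: $d_x(-\mathcal{D}\log\beta_t)=0$, since $\beta_t$ carries no $x$-dependence.

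Finally I would feed this into the Curvature Formula
\begin{equation*}
R(x,t,g)=g\,dt\wedge d_x\left[\mathcal{D}\log(D_t^x)+r_t^x\right],
\end{equation*}
obtaining $R(x,t,g)=g\,dt\wedge 0=0$ for all $(x,t,g)\in\mathcal{B}$, almost surely; by definition this is precisely (ZC), which completes the implication. I do not anticipate a genuine obstacle: the whole argument is a single substitution once Theorem~\ref{holonomy}(ii) is in hand, with all the analytic weight already carried by that cited result. The only points deserving a word of care are the cancellation of the two $\mathcal{D}\log D_t^x$ terms, which is an exact pathwise identity valid on the set where $D_t^x$ does not vanish (the same regime in which $\log D_t^x$ is used throughout), and the fact that $d_x$ kills any quantity independent of the nominals; both are elementary and require no regularity beyond what Theorem~\ref{holonomy} and the derivation of Proposition~\ref{curvature} already furnish.
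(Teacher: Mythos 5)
Your proposal is correct and follows exactly the route the paper intends: the corollary is stated as an immediate consequence of Theorem~\ref{holonomy} (equivalence of (NFLVR) with the existence of the positive semimartingale $\beta$ satisfying $r_t^x=-\mathcal{D}\log(\beta_t D_t^x)$) combined with the Curvature Formula of Proposition~\ref{curvature}, under which $\mathcal{D}\log(D_t^x)+r_t^x=-\mathcal{D}\log\beta_t$ is $x$-independent and hence killed by $d_x$. The paper gives no separate written proof, so your substitution argument is precisely the omitted one.
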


\noindent As an example to demonstrate how the most important geometric concepts of Section \ref{section2} can be applied we consider an asset model whose dynamics is given by a multidimensional It\^{o} process. Let us consider a market consisting of  $N+1$ assets labeled by $j=0,1,\dots,N$, where the $0$-th asset is the cash account utilized as a num\'{e}raire. Therefore, as explained in the introductory Subsection \ref{StochasticPrelude}, it suffices to model the price dynamics of the other assets $j=1,\dots,N$ expressed in terms of the $0$-th asset. As vector valued semimartingales for the discounted price process $\hat{S}:[0,+\infty[\times\Omega\rightarrow\mathbb{R}^N$ and the short rate $r:[0,+\infty[\times\Omega\rightarrow\mathbb{R}^N$, we chose the multidimensional It\^{o} processes given by
\begin{equation}\label{Dyn}
\begin{split}
d\hat{S}_t&=\hat{S}_t(\alpha_tdt+\sigma_tdW_t)\\
dr_t&=a_tdt+b_tdW_t,
\end{split}
\end{equation}
where
\begin{itemize}
\item $(W_t)_{t\in[0,+\infty[}$ is a standard $\mathbb{P}$-Brownian motion in $\mathbb{R}^K$, for some $K\in\mathbb{N}$,
\item $(\sigma_t)_{t\in[0,+\infty[}$,  $(\alpha_t)_{t\in[0,+\infty[}$ are  $\mathbb{R}^{N\times K}$-, and respectively,  $\mathbb{R}^{N}$- valued  stochastic processes, $\sigma_t$ has maximal rank, i.e. $\text{rank}(\sigma_t)=K$, and,
\item $(b_t)_{t\in[0,+\infty[}$,  $(a_t)_{t\in[0,+\infty[}$ are  $\mathbb{R}^{N\times K}$-, and respectively,  $\mathbb{R}^{N}$- valued  stochastic processes.
\end{itemize}

\begin{proposition}\label{PropIto}
Let the dynamics of a market model be specified by following It\^o processes as in (\ref{Dyn}), where we additionally assume that  the coefficients
\begin{itemize}
\item $(\alpha_t)_t,(\sigma_t)_t$, and  $(r_t)_t$ satisfy
\begin{equation*}
\lim_{s\rightarrow t^-}\mathbb{E}_s[\alpha_t]=\alpha_t,\quad\lim_{s\rightarrow t^-}\mathbb{E}_s[r_t]=r_t,\quad\lim_{s\rightarrow t^-}\mathbb{E}_s[\sigma_t]=\sigma_t,
\end{equation*}
\item $(\sigma_t)_t$ is an It\^o process,
\item $(\sigma_t)_t$ and $(W_t)_t$ are independent processes.
\end{itemize}
Then, the market model satisfies the (ZC) condition if and only if
\begin{equation}\label{ZCCond}
\alpha_t+r_t \in {\rm Range} (\sigma_t).
\end{equation}
\end{proposition}
\begin{remark}In the case of the classical model, where there are no term structures (i.e. $r\equiv0$), the condition (\ref{ZCCond}) reads as $\alpha_t\in{\rm Range} (\sigma_t)$.
\end{remark}

\begin{proof}
Let us consider the expression for It\^o's integral with respect to Stratonovich's
\begin{equation*}
\int_0^t\sigma_udW_u=\int_0^t\sigma_u\circ dW_u-\frac{1}{2}\int_0^td\left<\sigma,W\right>_u,
\end{equation*}
and take Nelson's derivative corresponding to the Stratonovich's integral:
\begin{equation}
\mathcal{D}\int_0^t\sigma_udW_u=\sigma_t\mathcal{D}W_t-\frac{1}{2}\mathcal{D}\left<\sigma,W\right>_t.
\end{equation}
Since
\begin{equation}
\mathcal{D}W_t=\frac{W_t}{2t}
\end{equation}
and, because of the independence assumption for the two It\^{o} processes $(\sigma_t)_t$ and $(W_t)_t$,
\begin{equation*}
\left<\sigma,W\right>_t\equiv0,
\end{equation*}
we obtain
\begin{equation*}
\mathcal{D}\int_0^t\sigma_udW_u=\sigma_t\frac{W_t}{2t},
\end{equation*}
which, inserted into the asset dynamics
\begin{equation*}
\hat{S}_t=\hat{S}_0\exp\left(\int_0^t(\alpha_u-\frac{1}{2} {\rm diag} ({\sigma_u}\sigma_u^{\dagger}))du+\int_0^t\sigma_udW_u\right),
\end{equation*}
\noindent leads to
\begin{equation*}
\mathcal{D}\log\hat{S}_t=\alpha_t-\frac{1}{2} {\rm diag}({\sigma_t}\sigma_t^{\dagger})+\sigma_t\frac{W_t}{2t}.
\end{equation*}
By Proposition \ref{curvature} the curvature vanishes if and only if for all $x\in\mathbb{R}^N$
\begin{equation*}
\mathcal{D}\log \hat{S}_t^x+r_t^x=C_t,
\end{equation*}
for a real valued stochastic process $(C_t)_{t\ge0}$, or, equivalently
\begin{equation*}
\mathcal{D}\log \hat{S}_t+r_t=C_te,
\end{equation*}
where $e:=[1,\dots,1]^{\dagger}$ or
\begin{equation}\label{noarbalphabeta}
\alpha_t+r_t-\frac{1}{2} {\rm diag} (\sigma_t{\sigma_t}^{\dagger})+\sigma_t\frac{W_t}{2t}=C_te.
\end{equation}
Equation (\ref{noarbalphabeta}) is the formulation of the (ZC) condition for the market model (\ref{Dyn}).
By taking on both sides of (\ref{noarbalphabeta}) $\lim_{h\rightarrow0^+}\mathbb{E}_{t-h}[\cdot]$, and utilizing the independence assumption, from which
\begin{equation*}
\mathbb{E}_{t-h}\left[\sigma_t\frac{W_t}{2t}\right]=\mathbb{E}_{t-h}\left[\sigma_t\right]\underbrace{\mathbb{E}_{t-h}\left[\frac{W_t}{2t}\right]}_{=0}=0
\end{equation*}
follows, we obtain, using the continuity assumption for $(\alpha_t)_t,(\sigma_t)_t$, and  $(r_t)_t$,
\begin{equation*}
\alpha_t+r_t-\frac{1}{2} {\rm diag}(\sigma_t{\sigma_t}^{\dagger})=\beta_te,
\end{equation*}
where $\beta_t:=\lim_{h\rightarrow0^+}\mathbb{E}_{t-h}[C_t]$ is a predictable process. Therefore, equation (\ref{noarbalphabeta}) becomes
\begin{equation}
\sigma_t\frac{W_t}{2t}=(C_t-\beta_t)e,
\end{equation}
and, thus
\begin{equation}
e\in{\rm Range} (\sigma_t),
\end{equation}
the space spanned by the column vectors of $\sigma_t$. Since $\sigma_t$ has maximal rank, the $K$column vectors of $\sigma_t$ are linearly independent and $C_t-\beta_t\neq0$.\par Let $P_{\sigma_t}$, $P_{\sigma_t^\bot}$ denote the orthogonal projections onto ${\rm Range} (\sigma_t)$ and its orthogonal complement ${\rm Range} (\sigma_t)^{\bot}$, respectively. Then, we can decompose
\begin{equation}
\alpha_t+r_t=P_{\sigma_t}(\alpha_t+r_t)+P_{\sigma_t^\bot}(\alpha_t+r_t),
\end{equation}
and
\begin{equation}\label{eqP}
P_{\sigma_t^\bot}(\alpha_t+r_t)=P_{\sigma_t^\bot}\left(C_te\right)-P_{\sigma_t^\bot}\left(\sigma_t\frac{W_t}{2t}\right)+P_{\sigma_t^\bot}\left(\frac{1}{2} {\rm diag} (\sigma_t{\sigma_t}^{\dagger})\right).
\end{equation}
Since $e$ and $\sigma_tW_t$ lie in ${\rm Range} (\sigma_t)$, the first two addenda on the right hand side of (\ref{eqP}) vanish. By Lemmata \ref{diag} and \ref{proj} the third one vanishes as well, so that $P_{\sigma_t^\bot}(\alpha_t+r_t)=0$, i.e. $\alpha_t+r_t\in{\rm Range} (\sigma_t)$. Conversely, if  $\alpha_t+r_t\in{\rm Range} (\sigma_t)$, then equation (\ref{noarbalphabeta}) holds true, and
the proof of the equivalence between the (ZC) condition and (\ref{ZCCond}) is completed.
\end{proof}

\begin{lemma}\label{diag}
Let $A$ be a linear operator on the euclidean $\mathbb{R}^N$. The vector
\begin{equation*}
{\rm diag} (A):=\sum_{j=1}^N(Ae_j\cdot e_j)e_j
\end{equation*}
does not depend on the choice of the orthonormal basis $\{e_1,\dots,e_n\}$ of $\mathbb{R}^N$ and defines the \textbf{diagonal} of $A$.
\end{lemma}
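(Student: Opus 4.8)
The plan is to compare the vector $\diag(A)$ built from the standard basis $\{e_j\}$ against the one built from an arbitrary orthonormal basis $\{f_k\}$ and to track how it transforms. Writing $f_k=\sum_i U_{ik}e_i$ for the orthogonal change-of-basis matrix $U$ (so that $U^\dagger U=\mathbf{1}$), the scalars $Af_k\cdot f_k$ are precisely the diagonal entries of $U^\dagger A U$ read in the $\{e_j\}$-frame. Hence
\begin{equation*}
\diag\nolimits_f(A)=\sum_k (U^\dagger A U)_{kk}\,f_k = U\,\mathsf{d}(U^\dagger A U),
\end{equation*}
where $\mathsf{d}(\cdot)$ denotes the column vector of diagonal entries. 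The lemma is therefore equivalent to the identity $U\,\mathsf{d}(U^\dagger A U)=\mathsf{d}(A)$ for every orthogonal $U$, and the whole argument reduces to establishing this single identity.

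To attack it I would expand the $m$-th component of the left-hand side,
\begin{equation*}
\big(U\,\mathsf{d}(U^\dagger A U)\big)_m=\sum_{k,i,l}U_{mk}U_{ik}U_{lk}\,A_{il},
\end{equation*}
and try to collapse the sum over $k$ using the orthogonality relations $\sum_k U_{ik}U_{lk}=\delta_{il}$. The natural simplification is to split $A$ into its symmetric and antisymmetric parts, noting that the antisymmetric part contributes nothing to any diagonal entry $Ae_j\cdot e_j$, and then to diagonalise the symmetric part where $\diag$ is easiest to read off.

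The step I expect to be the real obstacle — and, on reflection, the step that does not go through as stated — is exactly this collapse. The summand carries \emph{three} factors of $U$ sharing the index $k$, so $\sum_k U_{mk}U_{ik}U_{lk}$ is a genuine rank-three object that does not reduce to a Kronecker delta; consequently $U\,\mathsf{d}(U^\dagger A U)\neq \mathsf{d}(A)$ for general (even symmetric) $A$. A rotation by $\pi/4$ applied to the rank-one projection $A=\operatorname{diag}(1,0)$ already yields two distinct diagonal vectors, confirming that the vector $\diag(A)$ genuinely depends on the basis.

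What \emph{is} invariant after the contraction is the scalar obtained by also setting $m=i$ and summing, i.e.\ the trace $\tr(A)=\sum_j Ae_j\cdot e_j=\sum_j Af_j\cdot f_j$, which follows immediately from $\tr(U^\dagger A U)=\tr(A)$. I would therefore conclude that the invariant content here is the trace, and that the vector $\diag(A)$ is only well defined relative to a \emph{fixed} distinguished basis — in the financial model the labelled asset basis $\{e_j\}$. The later occurrences of $\diag(\sigma_t\sigma_t^\dagger)$ should accordingly be understood in that fixed asset frame rather than as a basis-independent quantity, and any conclusion about it (such as its behaviour under $P_{\sigma_t^\bot}$) must be argued in those coordinates rather than deduced from an appeal to basis-invariance.
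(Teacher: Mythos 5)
You are right, and the issue is not a gap in your argument but in the paper's: Lemma \ref{diag} is false as stated, and your rotation-by-$\pi/4$ counterexample is decisive. Explicitly, for $N=2$, $A$ the orthogonal projection onto $\mathbb{R}e_1$, and $f_{1,2}:=\tfrac{1}{\sqrt2}(e_1\pm e_2)$, one gets
\begin{equation*}
\sum_{j=1}^2(Ae_j\cdot e_j)\,e_j=e_1,
\qquad
\sum_{j=1}^2(Af_j\cdot f_j)\,f_j=\tfrac12 f_1+\tfrac12 f_2=\tfrac{1}{\sqrt2}\,e_1,
\end{equation*}
two different vectors (their traces, $1$ and $\tfrac12+\tfrac12$, agree, in line with your observation that the trace is the only invariant content). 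Your structural diagnosis is also correct: the $e$-components of the $f$-basis diagonal are $\sum_{k,i,l}U_{mk}U_{ik}U_{lk}A_{il}$, a contraction with three factors of $U$ sharing the index $k$, which no orthogonality relation collapses; and restricting to symmetric $A$ (the case $\sigma\sigma^{\dagger}$ needed later in the paper) does not help, since the counterexample is already symmetric.

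For comparison, the paper's proof breaks at its first substitution, where $[e_j]_{\{e\}}$ is replaced by $[U]^{\dagger}_{\{f\}}[f_j]_{\{f\}}$. As numerical columns, $[e_j]_{\{e\}}$ and $[f_j]_{\{f\}}$ are both the $j$-th unit vector, whereas $[U]^{\dagger}_{\{f\}}[f_j]_{\{f\}}$ is the $j$-th column of $[U]^{\dagger}$, i.e. it equals $[e_j]_{\{f\}}$, the coordinates of $e_j$ \emph{in the new basis}; from that point on, coordinates relative to $\{e\}$ and $\{f\}$ are mixed inside single formulas. Moreover, the identity the paper ends with, $[\diag(A)]_{\{e\}}=[U]^{\dagger}_{\{f\}}[\diag(A)]_{\{f\}}$, is the inverse of the genuine change-of-coordinates law $[v]_{\{e\}}=[U]\,[v]_{\{f\}}$ for a fixed vector $v$, so even taken at face value it would not establish well-definedness. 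Your closing caveat also bites downstream: the $\diag(\sigma_t\sigma_t^{\dagger})$ in Proposition \ref{PropIto} is produced by It\^o's formula and is therefore the standard-coordinate diagonal with entries $\sum_k(\sigma_t)_{ik}^2$, and for that reading Lemma \ref{proj} fails as well — take $\sigma=(2,1)^{\dagger}$, so that $\diag(\sigma\sigma^{\dagger})=(4,1)^{\dagger}\notin\mathrm{Range}(\sigma)=\mathrm{span}\{(2,1)^{\dagger}\}$, hence $P_{\sigma^{\bot}}\diag(\sigma\sigma^{\dagger})\neq0$ (the lemma holds trivially for the diagonal computed in the basis adapted to $\mathrm{Range}(\sigma)$, but that is a different vector). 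Consequently the projection step in Proposition \ref{PropIto} cannot discard the term $P_{\sigma_t^{\bot}}\bigl(\tfrac12\diag(\sigma_t\sigma_t^{\dagger})\bigr)$; carrying it along, the zero-curvature criterion comes out as $\alpha_t+r_t-\tfrac12\diag(\sigma_t\sigma_t^{\dagger})\in\mathrm{Range}(\sigma_t)$ rather than (\ref{ZCCond}).
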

\begin{proof}
The coordinates of ${\rm diag} (A)$ with respect to the orthonormal basis $\{e_1,\dots,e_N\}$ can be written as
\begin{equation}
[{\rm diag}(A)]_{\{e\}}=\sum_{j=1}^N([e_j]_{\{e\}}^\dagger[A]_{\{e\}}[e_j]_{\{e\}})[e_j]_{\{e\}}
\end{equation}
Let us consider another orthonormal basis $\{f_1,\dots,f_n\}$ of $\mathbb{R}^N$. This means that there exists an orthogonal linear operator $U$ on $\mathbb{R}^N$ such that $Ue_j=f_j$ for all $j=1,\dots,N$. Therefore we can write
\begin{equation}
\begin{split}
[{\rm diag}(A)]_{\{e\}}&=\sum_{j=1}^N\left(([U]_{\{f\}}^\dagger[f_j]_{\{f\}})^\dagger[A]_{\{e\}}[U]_{\{f\}}^\dagger[f_j]_{\{f\}}\right)[U]_{\{f\}}^\dagger[f_j]_{\{f\}}\\
&=\sum_{j=1}^N\left([f_j]_{\{f\}}^\dagger\left([U]_{\{f\}}[A]_{\{e\}}[U]_{\{f\}}^\dagger\right)[f_j]_{\{f\}}\right)[U]_{\{f\}}^\dagger[f_j]_{\{f\}}\\
&=[U]_{\{f\}}^\dagger\left(\sum_{j=1}^N[f_j]_{\{f\}}^\dagger[A]_{\{f\}}[f_j]_{\{f\}}\right)\\
&=[U]_{\{f\}}^\dagger[{\rm diag}(A)]_{\{f\}}.
\end{split}
\end{equation}
Therefore, the coordinates of the diagonal transforms like a vector during a change of basis, and hence the diagonal is well defined.\\
\end{proof}

\begin{lemma}\label{proj}
Let $\sigma$ be a $\mathbb{R}^{N\times K}$ real matrix of rank $K$ and $P$ the orthogonal projection onto the orthogonal complement to the subspace generated by the column vectors of $\sigma$. Then,
\begin{equation*}
P{\rm diag}(\sigma\sigma^{\dagger})=0\in\mathbb{R}^N.
\end{equation*}
\end{lemma}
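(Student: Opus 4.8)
The plan is to exploit the freedom, granted by Lemma \ref{diag}, to evaluate $\diag(\sigma\sigma^\dagger)$ in whatever orthonormal basis is most convenient, and to choose one adapted to the orthogonal splitting $\mathbb{R}^N={\rm Range}(\sigma)\oplus{\rm Range}(\sigma)^\bot$. The key structural observation I would record first is that the kernel of the symmetric matrix $\sigma\sigma^\dagger$ coincides with ${\rm Range}(\sigma)^\bot$: indeed $\sigma\sigma^\dagger x=0$ forces $x^\dagger\sigma\sigma^\dagger x=\norm{\sigma^\dagger x}^2=0$, hence $\sigma^\dagger x=0$, which says exactly that $x$ is orthogonal to every column of $\sigma$. (The rank hypothesis $\rank\sigma=K$ is used only to fix $\dime{\rm Range}(\sigma)=K$; the kernel identity holds regardless.)

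Next I would pick an orthonormal basis $\{e_1,\dots,e_N\}$ of $\mathbb{R}^N$ with $e_1,\dots,e_K$ spanning ${\rm Range}(\sigma)$ and $e_{K+1},\dots,e_N$ spanning ${\rm Range}(\sigma)^\bot=\ker(\sigma\sigma^\dagger)$. For each index $j>K$ one then has $\sigma\sigma^\dagger e_j=0$, so the scalar $(\sigma\sigma^\dagger e_j\cdot e_j)$ appearing in the defining sum vanishes. Consequently only the indices $j\le K$ survive, and
\[
\diag(\sigma\sigma^\dagger)=\sum_{j=1}^{K}(\sigma\sigma^\dagger e_j\cdot e_j)\,e_j\in\spann\{e_1,\dots,e_K\}={\rm Range}(\sigma).
\]

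Since $\diag(\sigma\sigma^\dagger)$ is, by Lemma \ref{diag}, independent of the orthonormal basis used to compute it, the vector produced in this adapted frame is the genuine diagonal. Because $P$ is the orthogonal projection onto ${\rm Range}(\sigma)^\bot$ and we have just shown $\diag(\sigma\sigma^\dagger)\in{\rm Range}(\sigma)$, it follows that $P\diag(\sigma\sigma^\dagger)=0$, which is the assertion.

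The one delicate point is the legitimacy of evaluating the diagonal in the adapted basis rather than in the standard one: this is precisely what Lemma \ref{diag} licenses, and it is the hinge of the whole argument. Everything else reduces to the elementary identification $\ker(\sigma\sigma^\dagger)=\ker(\sigma^\dagger)={\rm Range}(\sigma)^\bot$ together with the block structure of $\sigma\sigma^\dagger$ in the adapted frame. I would therefore invest the care in stating the basis-change step cleanly, since it is exactly the basis-invariance of $\diag$ that converts the otherwise basis-dependent diagonal into a vector lying in ${\rm Range}(\sigma)$.
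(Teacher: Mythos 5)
Your proof is correct and takes essentially the same route as the paper's: both invoke Lemma \ref{diag} to justify computing $\diag(\sigma\sigma^{\dagger})$ in an orthonormal basis adapted to the splitting $\mathbb{R}^N={\rm Range}(\sigma)\oplus{\rm Range}(\sigma)^{\bot}$, observe that the terms indexed by the complementary vectors vanish, and conclude that the diagonal lies in ${\rm Range}(\sigma)$ and is therefore annihilated by $P$. The only cosmetic difference is that the paper obtains the vanishing directly from $(\sigma\sigma^{\dagger}f_j\cdot f_j)=\norm{\sigma^{\dagger}f_j}^2=0$ for $f_j\in{\rm Range}(\sigma)^{\bot}$, whereas you package the same computation as the kernel identity $\ker(\sigma\sigma^{\dagger})={\rm Range}(\sigma)^{\bot}$.
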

\begin{proof}
The real symmetric matrix $\sigma\sigma^{\dagger}\in\mathbb{R}^{N\times N}$ induces via standard orthonormal basis a selfadjoint linear operator on $\mathbb{R}^N$, which by Lemma \ref{diag} has a well defined diagonal. Let us enlarge $\sigma$ to an $\mathbb{R}^{N\times N}$ matrix, by adding $N-K$ zero column vectors. The matrix $\sigma\sigma^{\dagger}\in\mathbb{R}^{N\times N}$ remains the same. Let us consider an orthonormal basis of $\mathbb{R}^N$, $\{f_1,\dots,f_N\}$, where $\{f_1,\dots,f_K\}$ is a basis of $\text{Range}(\sigma)$ and $\{f_{K+1},\dots,f_N\}$ is a basis of its orthogonal complement, $\text{Range}(\sigma)^\bot$. The diagonal of $\sigma\sigma^{\dagger}$ reads
\begin{equation}
{\rm diag}(\sigma\sigma^{\dagger})=\sum_{j=1}^N(\sigma\sigma^{\dagger}f_j\cdot f_j)f_j=\sum_{j=1}^N(\sigma^{\dagger}f_j\cdot \sigma^{\dagger}f_j)f_j=\sum_{j=1}^K(\sigma^{\dagger}f_j\cdot \sigma^{\dagger}f_j)f_j,
\end{equation}
because $\sigma^{\dagger}f_j=0$ for $j=K+1,\dots,N$, being $f_j$ in the orthogonal complement of $\text{Range}(\sigma)$. Therefore,
\begin{equation}
P{\rm diag}(\sigma\sigma^{\dagger})=\sum_{j=1}^K(\sigma^{\dagger}f_j\cdot \sigma^{\dagger}f_j)Pf_j=0,
\end{equation}
because $f_j$ is in $\text{Range}(\sigma)$ for $j=1,\dots,K$ and $P$ is the projection onto $\text{Range}(\sigma)^\bot$ .
\end{proof}

\noindent Next, we show the equivalence of the (ZC) condition with (NFLVR) in the case of It\^o's dynamics.
\begin{proposition}\label{NovikovThm}
Under the same assumptions as Proposition \ref{PropIto}, the zero curvature condition for the market model specified by (\ref{Dyn}) , that is
\begin{equation*}
\mathcal{D}\log \hat{S}_t+r_t =C_te,
\end{equation*}
is equivalent to the no-free-lunch-with-vanishing-risk condition if the positive stochastic process $(\beta_t)_{t\ge0}$, defined as
\begin{equation*}
\beta_t:=\exp\left(-\int_0^tC_udu\right)
\end{equation*}
is a local martingale.
\end{proposition}
\begin{proof}
By Proposition \ref{curvature} the zero curvature (ZC) condition $R=0$ is equivalent with the existence of a stochastic process $(C_t)_{t\ge0}$ such that for all $i=1,\dots,N$ the equation
\begin{equation*}
\mathcal{D}\log \hat{S}_t^i+r_t ^i=C_t
\end{equation*}
\noindent holds. This means that
\begin{equation*}
\begin{split}
&\mathcal{D}\log \hat{S}_t^i=C_t-r_t^i\\
&\log\frac{S_t^i}{S_0^i}=\int_0^t(C_u-r_u^i)du\\
&S_t^i=S_0^i\exp\left(\int_0^tC_udu\right)\exp\left(-\int_0^tr_u^idu\right).
\end{split}
\end{equation*}
Therefore,
\begin{equation*}
\mathcal{D}\log(\beta_tD_t^i)+r_t^i=0
\end{equation*}
for all $i=1,\dots,N$. By Theorem \ref{holonomy}, if $(\beta_t)_{t\ge0}$ is a martingale, then we have proved (NFLVR).
\end{proof}

\noindent We can reformulate the result of Proposition \ref{PropIto} as follows.
\begin{corollary}\label{CorRho}
Let $\{J_t^1,\dots,J_t^{B}\}$ be an orthonormal basis of $\ker(\sigma_t)\subset\mathbb{R}^N$. Under the same assumptions as Proposition \ref{PropIto} the (ZC) condition for the market model (\ref{Dyn}), which is equivalent to (NFLVR), is equivalent to
\begin{equation}\label{rho}
\rho_t:=J_t^{\dagger}(\alpha_t+r_t)\equiv0\in\mathbb{R}^{B},
\end{equation}
where $J_t:=[J_t^1,\dots,J_t^B]$.
\end{corollary}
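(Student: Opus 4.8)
The plan is to reduce the statement entirely to Proposition \ref{PropIto} together with one elementary fact of Euclidean linear algebra, so that all stochastic content is already absorbed into the earlier result and nothing new about semimartingales is required. By Proposition \ref{PropIto}, and its equivalence with (NFLVR) established in Proposition \ref{NovikovThm}, the (ZC) condition holds if and only if $\alpha_t+r_t\in\mathrm{Range}(\sigma_t)$ pathwise a.s. for all $t$. It therefore suffices to show that, for a fixed vector $v\in\mathbb{R}^N$ and the subspace $\mathrm{Range}(\sigma_t)\subseteq\mathbb{R}^N$, the membership $v\in\mathrm{Range}(\sigma_t)$ is equivalent to $J_t^\dagger v=0$.

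First I would make explicit the identification hidden in the notation: since $\sigma_t$ is an $N\times K$ matrix, the subspace written $\ker(\sigma_t)\subset\mathbb{R}^N$ is to be read as $\ker(\sigma_t^\dagger)$, and the classical relation $\ker(\sigma_t^\dagger)=\mathrm{Range}(\sigma_t)^\bot$ shows that $\{J_t^1,\dots,J_t^B\}$ is precisely an orthonormal basis of the orthogonal complement of $\mathrm{Range}(\sigma_t)$, with $B=N-\mathrm{rank}(\sigma_t)$.

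Then I would invoke the orthogonal decomposition $\mathbb{R}^N=\mathrm{Range}(\sigma_t)\oplus\mathrm{Range}(\sigma_t)^\bot$ already used in the proof of Proposition \ref{PropIto}: a vector $v$ lies in $\mathrm{Range}(\sigma_t)$ if and only if its projection onto the orthogonal complement vanishes, i.e. $P_{\sigma_t^\bot}v=0$. Expanding the projection in the orthonormal basis gives $P_{\sigma_t^\bot}v=\sum_{i=1}^B\bigl((J_t^i)^\dagger v\bigr)J_t^i$, and since the $J_t^i$ are linearly independent this is zero precisely when $(J_t^i)^\dagger v=0$ for every $i$, that is, when $J_t^\dagger v=0\in\mathbb{R}^B$. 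Taking $v=\alpha_t+r_t$ converts the condition $\alpha_t+r_t\in\mathrm{Range}(\sigma_t)$ into $\rho_t=J_t^\dagger(\alpha_t+r_t)=0$, which is the assertion.

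There is no genuine analytic obstacle here; the content is a one-line orthogonality argument. The only points requiring care are the notational identification of $\ker(\sigma_t)$ with $\ker(\sigma_t^\dagger)=\mathrm{Range}(\sigma_t)^\bot$, and the pathwise interpretation of $J_t$, namely that the orthonormal basis of $\ker(\sigma_t^\dagger)$ can be chosen to depend measurably (indeed predictably) on $t$ and $\omega$ when $\mathrm{rank}(\sigma_t)$ is constant. Since the equivalence is asserted pathwise for each $t$, matching the formulation of Proposition \ref{PropIto}, the corollary follows immediately once that identification is recorded.
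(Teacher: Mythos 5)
Your proposal is correct and follows exactly the route the paper intends: the paper states this corollary without proof as an immediate reformulation of Proposition \ref{PropIto} (with the (NFLVR) equivalence supplied by Proposition \ref{NovikovThm}), and your argument simply makes that reformulation explicit via the orthogonality fact $v\in\mathrm{Range}(\sigma_t)\Leftrightarrow J_t^{\dagger}v=0$. Your reading of the paper's $\ker(\sigma_t)\subset\mathbb{R}^N$ as $\ker(\sigma_t^{\dagger})=\mathrm{Range}(\sigma_t)^{\bot}$ is also the right resolution of the notational slip, consistent with how $J_t$ is used later in the proof of Theorem \ref{ThmBS}.
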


\begin{remark}[\textbf{Counterexamples}]\label{remContA}
Let us consider a financial market with a cash account with $r_t^0=0$ and a single risky asset  with (discounted) price  given by
\begin{equation}\label{contA}
S_t = e^{X_t},\text{ where } X_t:=\int_0^t\frac{W_u}{u}+W_t,
\end{equation}
for a standard univariate Brownian motion $(W_t)_t$. By It\^{o}'s formula, it folllows that
\begin{equation*}
dS_t = S_t\left(\frac{1}{2}+\frac{W_t}{t}\right)+S_tdW_t\text{ and }S_0=1.
\end{equation*}
In the notation of Proposition \ref{PropIto}, this corresponds to $\alpha_t = \frac{1}{2}+\frac{W_t}{t}$ and $r_t=0$.
The coefficient $\alpha_t$ does not satisfy the assumption $\lim_{s\rightarrow t^-}\mathbb{E}_s[\alpha_t]=\alpha_t$, because
\begin{equation*}
\lim_{s\rightarrow t^-}\mathbb{E}_s\left[\frac{1}{2}+\frac{W_t}{t}\right]=\frac{1}{2}+\frac{1}{2}\lim_{s\rightarrow t^-}\underbrace{\mathbb{E}_s[W_t]}_{=0}\neq\alpha_t.
\end{equation*}
The process $(S_t)_t$ does not satisfy (NFLVR), since $\int_0^{\epsilon}\left(\frac{W_t}{t}\right)^2dt>0$ for all $\epsilon>0$ as a consequence of Corollary 3.2 of \cite{JeYo79}. In the terminology of Delbaen \& Schachermayer, model (\ref{contA}) generates immediate arbitrage opportunities. Other
simple counterexamples can be constructed from Brownian bridges, which provide well known
examples of models admitting arbitrage (see \cite{LoWi00}).\par
Moreover, Fontana \& Runggaldier present asset models in \cite{Fo15} (Example 7.5) and \cite{FoRu13} (page 59) based on Bessel processes, which do not fulfill the assumptions of Propositions \ref{PropIto} and \ref{NovikovThm}. They are an example of dynamics satisfying (NUPBR) but not (NFLVR). The proof of the (NUPBR) property is based on its equivalence with the non-existence of arbitrage possibilities of the first kind.
\end{remark}

%%%%%%%%%%%%%%%%%%%%%%%%%%%%%%%%%%%%%%%%%%%%%%%%%%%%%%%%%%%%%%%%%%
%
%
%                     Section 3
%
%
%%%%%%%%%%%%%%%%%%%%%%%%%%%%%%%%%%%%%%%%%%%%%%%%%%%%%%%%%%%%%%%%%%
\section{Arbitrage and Utility}
Let us now consider a utility function, that is a real $C^2$-function of a real variable, which is strictly monotone increasing (i.e. $u^{\prime}>0$) and concave (i.e. $u^{\prime\prime}<0$). Typically, a market participant would like to maximize the expected utility of its wealth at some time horizon. Let us assume that he (or she) holds a portfolio of synthetic zero bonds delivering at maturity base assets and that the time horizon is infinitesimally near, that is that the utility of the instantaneous total return has to be maximized. The portfolio values read as:
\begin{itemize}
\item At time $t-h$: $D^x_{t-h}P^x_{t-h,t+h}$.
\item At time $t$: $D^x_tP^x_{t,t+h}$.
\item At time $t+h$:  $D^x_{t+h}$.
\end{itemize}
\noindent From now on we make the following\\
\textbf{Assumptions:}
\begin{enumerate}
\item[\textbf{(A1):}] The market filtration $(\mathcal{A}_t)_{t\ge0}$ is the coarsest filtration for which $(D_t)_{t\ge0}$ is adapted.
\item[\textbf{(A2):}] The process $(D_t)_{t\ge0}$ is Markov with respect to the filtration $(\mathcal{A}_t)_{t\ge0}$.
\end{enumerate}

\begin{proposition}
Under the assumptions $(A1)$ and $(A2)$ the synthetic bond portfolio instantaneous return can be computed as:
\begin{equation*}
\text{Ret}_t^ x:=\lim_{h \rightarrow 0^+}\mathbb{E}_{t}\left[\frac{D^x_{t+h}-D^x_{t-h}P^x_{t-h,t+h}}{2hD^x_{t-h}P^x_{t-h,t+h}}\right]=\mathcal{D}\log (D^x_t)+r^x_t.
\end{equation*}
\end{proposition}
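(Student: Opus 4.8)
The plan is to read the quotient as a realized return of the synthetic-bond portfolio over the symmetric window $[t-h,t+h]$ centred at $t$, and to split it additively into a ``deflator'' contribution and a ``discounting'' contribution. Writing $D=D^x$ and $P=P^x$ for brevity, I would use the algebraic identity
\begin{equation*}
\frac{D_{t+h}-D_{t-h}P_{t-h,t+h}}{2h\,D_{t-h}P_{t-h,t+h}}
=\frac{D_{t+h}-D_{t-h}}{2h\,D_{t-h}P_{t-h,t+h}}+\frac{1-P_{t-h,t+h}}{2h\,P_{t-h,t+h}},
\end{equation*}
so that the limit $\mathrm{Ret}^x_t$ breaks into two limits that can be treated independently.

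First I would dispose of the discounting term. Since $P_{t-h,t+h}=\exp(-\int_{t-h}^{t+h}f^x_{t-h,u}\,du)$ is $\mathcal{A}_{t-h}$-measurable, the conditional expectation acts trivially on it, and by Definition \ref{int} the average of the forward rate over the shrinking window converges to the short rate: $\tfrac{1}{2h}\int_{t-h}^{t+h}f^x_{t-h,u}\,du\to r^x_t$. Expanding $1-P_{t-h,t+h}$ to first order then yields $\lim_{h\to0^+}\tfrac{1-P_{t-h,t+h}}{2hP_{t-h,t+h}}=r^x_t$, which accounts for the $r^x_t$ summand. Because $P_{t-h,t+h}\to1$ and $D_{t+h}-D_{t-h}\to0$, I would also argue that $P_{t-h,t+h}$ may be replaced by $1$ in the denominator of the first term without changing its limit, the discrepancy being of higher order in $h$ once the conditional expectation is taken.

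The heart of the matter is then the deflator term, where I must show
\begin{equation*}
\lim_{h\to0^+}\mathbb{E}_t\!\left[\frac{D^x_{t+h}-D^x_{t-h}}{2h\,D^x_{t-h}}\right]=\mathcal{D}\log(D^x_t).
\end{equation*}
Here assumptions (A1) and (A2) enter decisively: they guarantee that $D^x$ is a Markov process adapted to its own natural filtration, which is exactly Nelson's hypothesis for the existence of both the forward and the backward stochastic derivatives, and hence of the symmetric derivative $\mathcal{D}=\tfrac12(D+D_\ast)$. I would split the symmetric increment as $(D^x_{t+h}-D^x_t)+(D^x_t-D^x_{t-h})$: the forward increment, conditioned on the past $\mathcal{A}_t$, produces $\tfrac12\,(D D^x_t)$, while the backward increment must be handled through Nelson's backward derivative (future conditioning), the Markov property supplying the time-reversed representation that turns the otherwise non-convergent backward difference quotient into $\tfrac12\,(D_\ast D^x_t)$. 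Combining and dividing by $D^x_t$, the Stratonovich-type chain rule enjoyed by the symmetric derivative gives $\mathcal{D}D^x_t/D^x_t=\mathcal{D}\log(D^x_t)$.

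I expect the backward increment to be the main obstacle. Naively conditioning the two-sided quotient on $\mathcal{A}_t$ alone leaves the backward term $(D^x_t-D^x_{t-h})/2h$ measurable and divergent, and a careless Taylor expansion of the nonlinear return manufactures a spurious quadratic-variation term of order one. The crux is therefore to show, using the Markov structure from (A1)--(A2) and the precise definitions of Nelson's derivatives recalled in Appendix A, that the symmetric midpoint conditioning cancels all It\^o corrections, so that the return reduces exactly to the Stratonovich object $\mathcal{D}\log(D^x_t)$ with no leftover variance term. Once this identification is secured, summing the two contributions yields $\mathrm{Ret}^x_t=\mathcal{D}\log(D^x_t)+r^x_t$.
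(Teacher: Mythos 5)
Your proposal follows essentially the same route as the paper's proof: the identical algebraic splitting of the return into a deflator term and a discounting term, the forward-rate expansion of $1-P^x_{t-h,t+h}$ giving $r^x_t$, and the use of (A1)--(A2) to identify the symmetric difference quotient of $D^x$ with Nelson's mean derivative, so that the first term becomes $\mathcal{D}D^x_t/D^x_t=\mathcal{D}\log(D^x_t)$. The only difference is presentational: the paper compresses the whole Markov argument into the single assertion that under (A1)--(A2) conditioning on $\mathcal{A}_t$ agrees with conditioning on the past, present and future filtrations of $D$, whereas you spell out explicitly why this substitution is exactly what turns the otherwise divergent backward increment into $\tfrac12 D_*D^x_t$.
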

\begin{proof} Under the assumptions $(A1)$ and $(A2)$ the conditional expectations with respect to the market filtration $(\mathcal{A}_t)_{t\ge 0}$  are the same as those computed with respect to the present $(\mathcal{N}_t)_{t\ge 0}$, past $(\mathcal{P}_t)_{t\ge 0}$ and future $(\mathcal{F}_t)_{t\ge 0}$ filtrations (see Appendix \ref{Derivatives}). Therefore, we can develop the instantaneous return as
\begin{equation*}
\begin{split}
&\lim_{h \rightarrow0^+}\mathbb{E}_{t}\left[\frac{D^x_{t+h}-D^x_{t-h}P^x_{t-h,t+h}}{2hD^x_{t-h}P^x_{t-h,t+h}}\right]\\
&=\lim_{h \rightarrow0^+}\mathbb{E}_{t}\left[\frac{D^x_{t+h}-D^x_{t-h}}{2hD^x_{t-h}P^x_{t-h,t+h}}+\frac{1-P^x_{t-h,t+h}}{2hP^x_{t-h,t+h}}\right]\\
&=\frac{1}{D_t^x}\mathcal{D}D_t^x+\lim_{h \rightarrow0^+}\frac{\exp\left(\int_{t-h}^{t+h}ds\,f_{t-h,s}^x\right)-1}{2h}=\mathcal{D}\log D_t^x + r_t^x.
\end{split}
\end{equation*}
\end{proof}
\begin{remark} This portfolio of synthetic zero bonds in the theory corresponds to a portfolio of futures in practice. If the short rate  vanishes, then the future corresponds to the original asset.
\end{remark}
\begin{definition}[\textbf{Expected Utility of Synthetic Bond Portfolio Return}]\label{expU} Let $t\ge s$ be fixed times. The expected utility maximization problem at time $s$ for the horizon $T$ for initial capital $\xi$ writes
\begin{equation}\label{opt}
        \sup_{\substack{x=\{x_h\}_{h\ge s}\\D_s^{x_s}=\xi}}\mathbb{E}_s\left[u\left(\exp\left(\int_s^Tdt\,\left(\mathcal{D}\log (D^{x_t}_t)+r^{x_t}_t\right)\right)D_s^{x_s}P_{s,T}^{x_s}\right)\right],
\end{equation}
where the supremum is taken over all weak $\mathcal{D}$-differentiable self-financing admissible strategies $x=\{x_u\}_{u\ge0}$.
\end{definition}
\noindent Now we can formulate the first result of this subsection.

\begin{theorem}\label{utility}
Let us assume $(A1)$ and $(A2)$, and that $(D_t)_t, (r_t)_t$ are weakly $\mathcal{D}$-differentiable semimartingales. The market curvature vanishes if and only if the expected utility maximization problem can be solved for all times and horizons for a chosen utility function.
\end{theorem}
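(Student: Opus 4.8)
The plan is to reduce the whole statement to the behaviour of the instantaneous synthetic-bond return $\text{Ret}_t^x=\mathcal{D}\log(D_t^x)+r_t^x$, which is exactly the quantity appearing both in the Curvature Formula and in the utility functional. By Proposition \ref{curvature} one has $R(x,t,g)=g\,dt\wedge d_x[\mathcal{D}\log(D_t^x)+r_t^x]$, so the curvature vanishes a.s. if and only if $d_x\text{Ret}_t^x=0$, i.e. the instantaneous return does not depend on the portfolio nominals and equals a single process $C_t$. First I would record this equivalence and read off, via Theorem \ref{holonomy}(ii), that $C_t=-\mathcal{D}\log\beta_t$ for the associated positive semimartingale (state-price deflator) $\beta$; thus zero curvature is precisely the statement that every portfolio earns the same deflated instantaneous return.

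Next I would rewrite the objective of Definition \ref{expU} as $\mathbb{E}_s[u(W_T)]$ with $W_T=\exp(\int_s^T\text{Ret}_t^{x_t}\,dt)\,D_s^{x_s}P_{s,T}^{x_s}$, isolating the two ways a strategy enters: the cumulative return $\int_s^T\text{Ret}_t^{x_t}\,dt$, governed by the whole path $\{x_t\}_{t\in(s,T]}$, and the initial bond value $D_s^{x_s}P_{s,T}^{x_s}$, governed only by the initial allocation $x_s$. The key observation is that for $t>s$ the choice of $x_t$ affects only the return integral. Under zero curvature $\text{Ret}_t^{x_t}\equiv C_t$, so the factor $\exp(\int_s^T C_t\,dt)$ is strategy-independent, and using Theorem \ref{holonomy}(iii) in the form $D_s^{x_s}P_{s,T}^{x_s}=\mathbb{E}_s[\beta_T D_T^{x_s}]/\beta_s$ the functional reduces to a classical expected-utility problem priced by the deflator $\beta$. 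Under (A1) and (A2) this is a well-posed static problem in the initial allocation (with the budget normalization $D_s^{x_s}=V_0$) and admits an optimizer; hence solvability holds for every utility, time and horizon.

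For the converse I would argue contrapositively. If the curvature does not vanish, then $\text{Ret}_t^x$ genuinely depends on $x$ on a set of positive $dt\otimes d\mathbb{P}$ measure, and is in fact unbounded above along directions of $x$: as $x$ approaches the degeneracy locus $\{D_t^x=0\}$ both $\mathcal{D}\log(D_t^x)$ and $r_t^x=\sum_j x_jD_t^j r_t^j/D_t^x$ diverge. Since the interior allocations $\{x_t\}_{t\in(s,T]}$ can be varied without touching $D_s^{x_s}P_{s,T}^{x_s}$, one can push $\int_s^T\text{Ret}_t^{x_t}\,dt$, and therefore $W_T$ and $\mathbb{E}_s[u(W_T)]$, arbitrarily high along a sequence of admissible strategies; no single strategy attains the supremum, so the problem is unsolvable. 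Equivalently, non-vanishing curvature is the failure of (NUPBR)/(NFLVR) (Theorem \ref{holonomy}, Proposition \ref{NovikovThm}), and the existence of an optimizer for expected utility is classically incompatible with such arbitrage.

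The main obstacle is this converse direction: turning ``the return depends on $x$'' into ``no maximizer exists'' requires controlling the supremum carefully under the self-financing and admissibility constraints, and making precise the divergence of $\text{Ret}_t^x$ near $\{D_t^x=0\}$ while remaining inside the admissible portfolio set $X$. I expect the cleanest rigorous route is to lean on the equivalence, already established in Section \ref{section2}, between zero curvature and the no-arbitrage conditions, together with the classical duality between absence of arbitrage and solvability of the utility maximization problem, rather than building the improving sequence by hand. Assumptions (A1) and (A2) enter precisely to guarantee that the instantaneous-return representation and the conditional expectations defining the functional are valid pathwise, so that the reduction in the second paragraph is legitimate.
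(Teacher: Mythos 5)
Your forward direction starts from the right observation (by Proposition \ref{curvature}, (ZC) says exactly that $\mathcal{D}\log(D_t^x)+r_t^x=C_t$ is portfolio-independent), but it then invokes Theorem \ref{holonomy}(ii)--(iii) to produce a state-price deflator $\beta$. That step is not available: Theorem \ref{holonomy} characterizes (NFLVR), which is strictly stronger than (ZC), and the missing ingredient is precisely that the positive process $\beta_t=\exp\left(-\int_0^t C_u\,du\right)$ need not be a semimartingale (this is the whole content of Proposition \ref{NovikovThm}, which needs the It\^{o} assumptions, and of the paper's Bessel-process remark: (ZC)/(NUPBR) can hold while (NFLVR) fails). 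The same conflation undermines your converse. The fallback you call the ``cleanest rigorous route'' --- reducing non-vanishing curvature to failure of (NUPBR)/(NFLVR) and citing classical arbitrage/utility duality --- is either unavailable in the generality of Theorem \ref{utility} (Proposition \ref{NovikovThm} assumes It\^{o} dynamics with predictable coefficients and $\left<\sigma,W\right>\equiv0$, which assumptions (A1)--(A2) do not provide) or circular: in this paper the equivalence (ZC) $\Leftrightarrow$ (NUPBR) is Theorem \ref{thm_ZC_equiv}, whose proof uses Theorem \ref{utility} (applied with $u=\log$) as its main ingredient, so it cannot be an input here. The direct route you sketch instead --- an improving sequence exploiting divergence of $\text{Ret}_t^x$ near $\{D_t^x=0\}$ --- is exactly the step you leave open: it must respect predictability, $\mathcal{D}$-differentiability, admissibility and, above all, the self-financing constraint $\mathcal{D}x_t\cdot D_t=0$, which couples allocations across time; one also has to control the sign of the divergence and handle utilities bounded above, where an unbounded wealth supremum does not by itself rule out attainment.

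For comparison, the paper's proof is a self-contained variational argument that never passes through deflators or no-arbitrage conditions: the self-financing constraint is enforced by a Lagrange multiplier process $(\lambda_t)$, the optimal strategy is perturbed by variations vanishing at the endpoints, and integration by parts turns the first-order conditions into
\begin{equation*}
\left.M\frac{\partial}{\partial x}\left(\mathcal{D}\log (D_t^{x})+r_t^{x}\right)\right|_{x=x_t}+\mathcal{D}(\lambda_tD_t)=0 ,
\end{equation*}
which, by arbitrariness of the variations and of the initial allocation, forces $\mathcal{D}(\lambda_tD_t)=0$ and $\mathcal{D}\log (D_t^{x})+r_t^{x}=C_t$ independent of $x$, i.e.\ zero curvature by Proposition \ref{curvature}; concavity of the problem makes these conditions necessary and sufficient, giving both directions at once. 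Your outline becomes a proof only if you actually carry out the improving-sequence construction under the stated constraints, or replace it with an argument, like the paper's, that does not rely on the (ZC) $\Leftrightarrow$ no-arbitrage equivalences.
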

\noindent This result can be seen as the natural generalization of the corresponding result in discrete time, as Theorem 3.5 in \cite{FoeSc04}, see also \cite{Ro94}. Compare with Bellini's, Frittelli's and Schachermayer's results for infinite dimensional optimization problems in continuous time, see Theorem 22 in \cite{BeFr02} and Theorem 2.2 in \cite{Scha01}. Nothing is said about the fulfilment of  the no-free-lunch-with-vanishing-risk condition: only the weaker zero curvature condition is equivalent to the maximization of the expected utility at all times for all horizons.
\begin{proof} The optimization problem (\ref{opt}) into a standard problem of stochastic optimal theory in continuous time which can be solved by means of a fundamental solution of the Hamilton-Jacobi-Bellman partial differential equation.\par However, there is a direct method, using Lagrange multipliers for Banach spaces (see \cite{Lu69} pages 239–270 and \cite{Ze95} Section 4.14, pages 270–271). First, remark that problem (\ref{expU}) is a concave optimization problem with convex domain and concave utility function and has therefore a unique solution corresponding to a global maximum. The Lagrange principal function corresponding to the maximum problem
\begin{equation}\label{end}
\begin{split}
\Phi(x,\lambda,\mu)&:=\mathbb{E}_s\left[u\left(\exp\left(\int_s^Tdt\,\left(\mathcal{D}\log (D^{x_t}_t)+r^{x_t}_t\right)\right)D_s^{x_s}P_{s,T}^{x_s}\right)\right. \\
&\left.\qquad\qquad-\int_s^Tdt\,\lambda_t\mathfrak{D}x_t\cdot
D_t\right]-\mu(D_s^{x_s}-\xi).
\end{split}
\end{equation}
Note that the Lagrange multiplier $\lambda$ corresponding to the self-financing condition (\ref{54}), expressed in terms of Nelson's derivative corresponding to It\^{o}'s differential, is a stochastic process $(\lambda_t)_{t\ge0}$. This Lagrange multiplier is weak $\mathcal{D}$-differentiable as all process involved so far are. The Lagrange multiplier $\mu$ corresponding to the initial wealth is a real number. To solve the maximization problem for $\Phi$ with respect to the processes $(x_t)$ and $(\lambda_t)$ we embed the optimal solution into a one parameter family as
\begin{equation*}
\begin{cases}
x_t(\epsilon)&:=x_t+\epsilon \delta x_t\\
\lambda_t(\eta)&:=\lambda_t+\eta \delta \lambda_t\\
\mu(\nu)&:=\mu+\nu\delta \mu,
\end{cases}
\end{equation*}
where $\epsilon$, $\eta$ and $\nu$ are real parameters defined in a neighborhood of $0$, and $\delta x_t$, $\delta \lambda_t$ and $\delta \nu$ are arbitrary variations such that the boundary conditions
\begin{equation}\label{bc}
\begin{cases}
&x_s(\epsilon)\equiv x_s\\
&x_T(\epsilon)\equiv x_T ,
\end{cases}
\end{equation}
are satisfied. The Lagrange principal equations associated to this maximization problem read

\begin{align}\label{LPE}
\begin{cases}
%1
\displaystyle \frac{\partial \Phi}{\partial \epsilon} \Biggl|_{\epsilon=\eta=\nu:=0} &\displaystyle = \ \mathbb{E}_s\Biggl[ u^{\prime}\left(\exp\left(\int_s^Tdt\,\left(\mathcal{D}\log (D^{x_t}_t)+r^{x_t}_t\right)\right)D_s^{x_s}P_{s,T}^{x_s}\right) \\
&\qquad\qquad \displaystyle \cdot\exp\left(\int_s^Tdt\,\left(\mathcal{D}\log (D^{x_t}_t)+r^{x_t}_t\right)\right)D_s^{x_s}P_{s,T}^{x_s} \\
&\qquad\qquad \displaystyle \cdot\int_s^Tdt\frac{\partial}{\partial x}\left(\mathcal{D}\log (D_t^{x})+r_t^{x}\right)\Bigl|_{x=x_t}\cdot\delta x_t\\
&\qquad\qquad\qquad\qquad\qquad\qquad \displaystyle -\int_s^Tdt\,\lambda_t\mathfrak{D}\delta x_t\cdot D_t \Biggl]-\mu D_s^{\delta x_s} \ = \ 0 \\
%2
 \displaystyle  \left.\frac{\partial \Phi}{\partial \eta}\right|_{\epsilon=\eta=\nu:=0} &\displaystyle = \ -\int_s^Tdt\,\delta\lambda_t\mathfrak{D}x_t\cdot D_t \ = \ 0\\
\displaystyle  \left.\frac{\partial \Phi}{\partial \nu}\right|_{\epsilon=\eta=\nu:=0} &\displaystyle = -\delta\mu(D_s^{x_s}-\xi) \ = \ 0,
\end{cases}
\end{align}
\noindent where, by Leibniz's theorem, we have interchanged differentiation with respect to $\epsilon$ or $\eta$ with the integration with respect to $t$ and the conditional expectation. The boundary condition implies $\delta x_s=0$, and hence $\mu D_s^{\delta x_s}=0$.\\ Integration by parts with respect to the time variable shows that
\begin{equation*}
-\int_s^Tdt\,\lambda_t\mathfrak{D}\delta x_t\cdot D_t=\int_s^Tdt\,\mathfrak{D(}\lambda_tD_t)\cdot\delta x_t,
\end{equation*}
\noindent which, inserted into the first equation of (\ref{LPE}) leads to
\begin{equation}\label{eqE}
\mathbb{E}_s\left[\int_s^Tdt\left(\left.M\frac{\partial}{\partial x}\left(\mathcal{D}\log (D_t^{x})+r_t^{x}\right)\right|_{x=x_t}+\mathfrak{D}(\lambda_tD_t)\right)\cdot\delta x_t\right]=0,
\end{equation}
\noindent where
\begin{equation*}
\begin{split}
M& :=u^{\prime}\left(\exp\left(\int_s^Tdt\,\left(\mathcal{D}\log (D^{x_t}_t)+r^{x_t}_t\right)\right)D_s^{x_s}P_{s,T}^{x_s}\right) \\
& \quad\quad\quad \cdot \exp\left(\int_s^Tdt\,\left(\mathcal{D}\log (D^{x_t}_t)+r^{x_t}_t\right)\right)D_s^{x_s}P_{s,T}^{x_s}
\end{split}
\end{equation*}
is a strictly positive random variable. Since the variation $\delta x_t$ is arbitrary we infer from (\ref{eqE})
\begin{equation*}
\left.M\frac{\partial}{\partial x}\left(\mathcal{D}\log (D_t^{x})+r_t^{x}\right)\right|_{x=x_t}+\mathcal{D}(\lambda_tD_t)=0\quad\text{ for any } t\in[s,T],
\end{equation*}
\noindent and, thus, for the choice $t:=s$,  it
follows, being the initial condition $x_s\in\mathbb{R}^N$ arbitrary
\begin{equation*}
\mathcal{D}\log (D_t^{x})+r_t^{x}=-\frac{1}{M}\mathfrak{D}(\lambda_tD_t^j)x_j+C_t^j \text{ for all }j=1,\dots, N, \end{equation*}
\noindent for a stochastic process $(C_t^j)_{t\ge0}$. Therefore
\begin{equation*}
-\frac{1}{M}\mathfrak{D}(\lambda_tD_t^j)x_j+C_t^j=-\frac{1}{M}\mathfrak{D}(\lambda_tD_t^i)x_i+C_t^i\text{ for all }j\neq i,
\end{equation*}
\noindent which can hold true if and only if
\begin{equation}
\begin{cases}
\mathfrak{D}(\lambda_tD_t^j)=0\\
C_t^j=C_t
\end{cases}
\end{equation}
\noindent for all $j=1,\dots,N$. Hence, for the optimal Lagrange multiplier,
\begin{equation*}
\mathfrak{D}(\lambda_tD_t)=0\in\mathbb{R}^N,
\end{equation*}
\noindent and
\begin{equation}\label{init}\mathcal{D}\log (D_t^{x})+r_t^{x}=C_t \text{ for all }j=1,\dots, N. \end{equation}
\noindent Therefore,  by Proposition \ref{curvature}, the curvature must vanish, which means that the existence of a solution to the maximization problem implies the vanishing of the curvature. The converse is also true, as it can be seen by following back the steps in this proof from (\ref{init}) to (\ref{end}). Hence, the equivalence between (ZC) and (\ref{opt}) holds true.\\
\end{proof}

\noindent It turns out that the two weaker notions of arbitrage, the zero curvature and the no-unbounded-profit-with-bounded-risk are equivalent.
\begin{theorem}\label{thm_ZC_equiv}
Let us assume $(A1)$ and $(A2)$, and that $(D_t)_t, (r_t)_t$ are semimartingales. Then,
\begin{equation*}
\text{(NUPBR)}\Rightarrow{(ZC)}.
\end{equation*}
\end{theorem}
\begin{remark}[\textbf{Counterexample}]
The model given by (\ref{contA}) satisfies (ZC), because it is one dimensional in the risky assets, but does not fulfill (NUPBR), because it allows for immediate arbitrage opportunities as shown in Remark \ref{remContA}.
\end{remark}
\begin{proof}[\text{Proof of Theorem \ref{thm_ZC_equiv}}]
By Proposition 2.1 (4) in \cite{HuSc10} the (NUPBR) is equivalent  with the existence of a growth optimal portfolio. We apply the classic set up of portfolio optimization to the portfolio of futures under consideration, (which covers as a special case the portfolio of base assets). Since the value of the portfolio at time $s$ is
\begin{equation*}
D_s^{x_s}P_{s,T}^{x_s},
\end{equation*}
and the growth factor from $s$ to $T$ is
\begin{equation*}
\exp\left(\int_s^Tdt\,\left(\mathcal{D}\log (D^{x_t}_t)+r^{x_t}_t\right)\right),
\end{equation*}
the solution of the expected utility maximization for $s:=0$ and arbitrary $T$ with utility function $u:=\log$ must be equal to the optimal growth portfolio. Therefore, by Theorem \ref{utility} (ZC) follows.
\end{proof}

\noindent Under what conditions is the converse of Theorem \ref{thm_ZC_equiv} true?
The equivalence of expected utility maximization and (NFLVR) can be proved for a particular choice of a Markov dynamics. Namely, if the asset dynamics follows an It\^{o} process, Proposition \ref{NovikovThm} and Theorem \ref{utility} lead to
\begin{proposition}\label{convNUPBR}
Let the dynamics of a market model be specified by following It\^o processes as in (\ref{Dyn}), where we additionally assume that  the coefficients
\begin{itemize}
\item $(\alpha_t)_t,(\sigma_t)_t$, and  $(r_t)_t$ satisfy
\begin{equation*}
\lim_{s\rightarrow t^-}\mathbb{E}_s[\alpha_t]=\alpha_t,\quad\lim_{s\rightarrow t^-}\mathbb{E}_s[r_t]=r_t,\quad\lim_{s\rightarrow t^-}\mathbb{E}_s[\sigma_t]=\sigma_t,
\end{equation*}
\item $(\sigma_t)_t$ is an It\^o process,
\item $(\sigma_t)_t$ and $(W_t)_t$ are independent processes.
\end{itemize}
Then,  the (NFLVR) condition holds true if and only if the expected utility maximization problem can be solved for all times and horizons for a chosen utility function.
\end{proposition}

\begin{corollary}
Under the same assumptions of Proposition \ref{convNUPBR},
\begin{equation*}
\text{(ZC)}\Rightarrow \text{(NUPBR)}.
\end{equation*}
\end{corollary}
\noindent These last two results are in line with the well-known results of \cite{Be01}, \cite{ChLa07}, \cite{KaKa07} and \cite{HuSc10}.

%%%%%%%%%%%%%%%%%%%%%%%%%%%%%%%%%%%%%%%%%%%%%%%%%%%%%%%%%%%%%%%%%%
%
%
%                     Section 4
%
%
%%%%%%%%%%%%%%%%%%%%%%%%%%%%%%%%%%%%%%%%%%%%%%%%%%%%%%%%%%%%%%%%%%
\section{Arbitrage and Derivative Pricing}
The (NFLVR) ia an equilibrium condition for financial markets, and the Black-Scholes PDE allows for a unique pricing of derivatives of the base assets of those financial markets. Even if the (ZC) is not fulfilled, the market forces determine an asset dynamics minimizing the total quantity of arbitrage allowed by the market, as it was shown in \cite{Fa15, Fa21}. The minimal arbitrage is an equilibrium condition as well, which generalizes the benchmark approach (e.g. \cite{HePl06}) leading to a probability measure equivalent to the statistical one, which is the best possible approximation for the risk neutral measure (cf. the forthcoming \cite{FaTa20}). In this case too, a (non-linear) PDE allows for a unique pricing of derivatives of the base assets, in which the arbitrage measure explicitly appears.
\subsection{The Black-Scholes PDE in the Presence of Arbitrage}
For markets allowing for arbitrage we are in the position to derive the price dynamics of derivatives whose underlying following an It\^{o} process. It is a non linear partial differential equation which coincides with the linear Black-Scholes partial differential equation as soon as the arbitrage vanishes.
\begin{theorem}\label{ThmBS}
Let us consider a market consisting in a bank account, an asset and a derivative  whose discounted prices $X_t$ and $\Phi(t,X_t)$ follow an It\^{o}'s process. In particular
\begin{equation*}
dX_t=X_t(\alpha_tdt+\sigma_t dW_t),
\end{equation*}
where $(\alpha_t)_{t\in[0,+\infty[}$ and $(\sigma_t)_{t\in[0,+\infty[}$ are real valued adapted processes, the latter with finite variation.
Assuming that the pay-off function $\Phi=\Phi(t,x)\in C^{1,2}$, the derivative discounted price solves the PDE
\begin{equation}\label{NonLinearBS}
\frac{\partial\Phi}{\partial t}+\frac{\sigma_t^2}{2}X_t^2\frac{\partial^2\Phi}{\partial x^2}=\rho_t\Phi\left(1+\left(\frac{1}{\Phi}\frac{\partial\Phi}{\partial x}X_t\right)^2\right)^{\frac{1}{2}},
\end{equation}
where $\rho_t$, defined in (\ref{rho}) measures the arbitrage allowed by the market.
\end{theorem}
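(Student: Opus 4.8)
The plan is to set up a market with two risky discounted assets, namely the underlying $X_t$ and the derivative $\Phi(t,X_t)$, and then apply the It\^{o}-process machinery from Proposition \ref{PropIto} and Corollary \ref{CorRho}. First I would compute the It\^{o} dynamics of $\Phi(t,X_t)$ by applying It\^{o}'s formula to $\Phi$, using $dX_t=X_t(\alpha_t\,dt+\sigma_t\,dW_t)$. This yields a drift and a diffusion coefficient for $\Phi$ expressed in terms of the partial derivatives $\partial_t\Phi$, $\partial_x\Phi$, $\partial_{xx}\Phi$ and the coefficients $\alpha_t,\sigma_t$. I would then write the combined market in the vectorial form (\ref{Dyn}): the discounted price vector is $\hat{S}_t=[X_t,\Phi(t,X_t)]^{\dagger}$, so that the $2\times 1$ (here $K=1$) volatility matrix $\sigma_t^{\mathrm{mkt}}$ has entries $\sigma_t$ (for the underlying) and $\tfrac{1}{\Phi}\partial_x\Phi\,X_t\,\sigma_t$ (for the derivative), and the drift vector $\alpha_t^{\mathrm{mkt}}$ has the underlying's $\alpha_t$ together with the logarithmic drift of $\Phi$ read off from It\^{o}'s formula.

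The second step is to read off the kernel of the market volatility matrix. Since $K=1$ and $N=2$, the range of $\sigma_t^{\mathrm{mkt}}$ is one-dimensional and the kernel $\ker(\sigma_t^{\mathrm{mkt}})^{\dagger}$ (equivalently $\mathrm{Range}(\sigma_t^{\mathrm{mkt}})^{\bot}$) is spanned by a single explicit unit vector $J_t$ orthogonal to the column $[\sigma_t,\ \tfrac{1}{\Phi}\partial_x\Phi\,X_t\,\sigma_t]^{\dagger}$; up to normalization this is proportional to $[\tfrac{1}{\Phi}\partial_x\Phi\,X_t,\ -1]^{\dagger}$, with normalizing factor $(1+(\tfrac{1}{\Phi}\partial_x\Phi\,X_t)^2)^{1/2}$. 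By Corollary \ref{CorRho} the arbitrage quantity is $\rho_t=J_t^{\dagger}(\alpha_t^{\mathrm{mkt}}+r_t^{\mathrm{mkt}})$; since we work with discounted prices the short-rate contribution vanishes, so $\rho_t=J_t^{\dagger}\alpha_t^{\mathrm{mkt}}$. Taking the inner product of $J_t$ with the drift vector produces, after multiplying through by the normalizing factor, exactly the left-hand side of (\ref{NonLinearBS}) minus the right-hand side.

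Concretely, the drift of $\Phi$ supplied by It\^{o}'s formula is $\partial_t\Phi+\alpha_tX_t\partial_x\Phi+\tfrac{1}{2}\sigma_t^2X_t^2\partial_{xx}\Phi$, while the underlying's drift is $\alpha_tX_t$; forming $J_t^{\dagger}\alpha_t^{\mathrm{mkt}}$ cancels the two $\alpha_tX_t\partial_x\Phi$ terms against each other and leaves $\partial_t\Phi+\tfrac{1}{2}\sigma_t^2X_t^2\partial_{xx}\Phi$ over the normalization, which rearranged gives (\ref{NonLinearBS}). The finite-variation hypothesis on $\sigma_t$ is what guarantees $\langle\sigma^{\mathrm{mkt}},W\rangle\equiv 0$, so that Proposition \ref{PropIto} and Corollary \ref{CorRho} genuinely apply, and by Proposition \ref{NovikovThm} the (ZC) description coincides with (NFLVR) in the arbitrage-free case $\rho_t\equiv 0$, recovering the classical linear Black-Scholes PDE.

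The main obstacle I anticipate is bookkeeping rather than conceptual: one must be careful that the drift coefficients $\alpha_t^{\mathrm{mkt}}$ in the normalized form (\ref{Dyn}) are \emph{logarithmic} drifts (the dynamics are written as $d\hat{S}_t=\hat{S}_t(\alpha_t\,dt+\sigma_t\,dW_t)$), so the It\^{o} drift of $\Phi$ must be divided by $\Phi$ before it is placed into $\alpha_t^{\mathrm{mkt}}$, and similarly the volatility row for $\Phi$ is the \emph{logarithmic} volatility $\tfrac{1}{\Phi}\partial_x\Phi\,X_t\,\sigma_t$. Keeping these $1/\Phi$ factors consistent across the drift entry, the volatility entry, and the normalization of $J_t$ is exactly what makes the factor $\Phi(1+(\tfrac{1}{\Phi}\partial_x\Phi\,X_t)^2)^{1/2}$ appear on the right-hand side; a sign or normalization slip here would corrupt the final PDE. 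Verifying that the derivative's own short rate and the underlying's short rate cancel correctly (both being zero for discounted prices) is the other point that needs explicit care.
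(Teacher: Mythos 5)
Your proposal is correct and is essentially the paper's own proof: the paper likewise writes the two-asset discounted market $\hat{S}_t=[X_t,\Phi(t,X_t)]^{\dagger}$ in the form (\ref{Dyn}), reads the derivative's logarithmic drift $\beta_t$ and logarithmic volatility $\tau_t=\sigma_tX_t\frac{\partial\Phi}{\partial x}/\Phi$ off It\^{o}'s formula (its system (\ref{BSS})), and obtains (\ref{NonLinearBS}) from the component of the drift along the kernel direction --- its decomposition $\bar{\alpha}_t=\lambda_t\bar{\sigma}_t+\rho_tJ_t$ followed by elimination of $\lambda_t$ is exactly your inner-product computation $\rho_t=J_t^{\dagger}\bar{\alpha}_t$, in which the two $\alpha_tX_t\frac{\partial\Phi}{\partial x}$ terms cancel. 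The one discrepancy is orientation: the paper takes $J_t=(\sigma_t^2+\tau_t^2)^{-1/2}[-\tau_t,\,\sigma_t]^{\dagger}$, which is the negative of your $J_t\propto[\frac{1}{\Phi}\frac{\partial\Phi}{\partial x}X_t,\,-1]^{\dagger}$, so with your choice $\rho_t$ flips sign and your computation literally yields (\ref{NonLinearBS}) with $-\rho_t$ on the right-hand side; this is harmless since (\ref{rho}) fixes the orthonormal basis $J_t$ only up to sign, but reproducing the stated PDE verbatim requires the paper's orientation.
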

\begin{proof}
We prove this theorem in the context of Corollary \ref{CorRho} with vanishing short rate $r_t$. By assumption, choosing $N:=2$ and $B:=1$, the market dynamics reads
\begin{equation}\label{dynBS}
d\hat{S}_t=\hat{S}_t(\bar{\alpha}_tdt+\bar{\sigma}_tdW_t),
\end{equation}
where
\begin{equation*}
\hat{S}_t:=\left[
             \begin{array}{c}
               X_t \\
               \Phi(t,X_t)\\
             \end{array}
           \right],\quad
\bar{\alpha}_t:=\left[
             \begin{array}{c}
               \alpha_t \\
               \beta_t\\
             \end{array}
           \right],\quad
\bar{\sigma}_t:=\left[
             \begin{array}{c}
               \sigma_t \\
               \tau_t\\
             \end{array}
           \right].
\end{equation*}
for appropriate real valued predictable processes  $(\beta_t)_{t\in[0,+\infty[}$ and $(\tau_t)_{t\in[0,+\infty[}$ characterizing the dynamics of the derivative.
We apply It\^{o}'s Lemma to the second component of (\ref{dynBS}). By comparing deterministic and stochastic terms we obtain
\begin{equation}\label{BSS}
\left\{
  \begin{array}{ll}
    {\displaystyle \frac{\partial\Phi}{\partial t}+\frac{\partial\Phi}{\partial x}X_t\alpha_t+\frac{\sigma_t^2}{2}\frac{\partial^2\Phi}{\partial x^2}X_t^2=\beta_t\Phi }\\ \\
    {\displaystyle \frac{\partial\Phi}{\partial x}X_t\sigma_t=\tau_t\Phi.}
  \end{array}
\right.
\end{equation}
The one dimensional $\ker(\bar{\sigma}_t)$ is spanned by
\begin{equation}
J_t:=(\sigma_t^2+\tau_t^2)^{-\frac{1}{2}}\left[
             \begin{array}{c}
               -\tau_t \\
               +\sigma_t\\
             \end{array}
           \right],
\end{equation}
and the vector $\bar{\alpha}_t$ admits the decomposition
\begin{equation}\label{alphadecomp}
\bar{\alpha}_t=\lambda_t\bar{\sigma}_t+\rho_tJ_t,
\end{equation}
for reals $\lambda_t$ and $\rho_t=\bar{\alpha}_t^{\dagger}J_t$. Now we can insert (\ref{alphadecomp}) into (\ref{BSS}) and eliminate $\lambda_t$, since the $\lambda_t$ terms cancel out. The first equation of (\ref{BSS}) becomes
\begin{equation}\label{firstBS}
\frac{\partial\Phi}{\partial t}+\frac{\sigma_t^2}{2}X_t^2\frac{\partial^2\Phi}{\partial x^2}=\rho_tX_t\frac{\partial\Phi}{\partial x}\left(\frac{\sigma_t^2+\tau_t^2}{\tau_t^2}\right)^{\frac{1}{2}}.
\end{equation}
The second equation of (\ref{BSS}) can be written as
\begin{equation*}
\frac{\sigma_t}{\tau_t}=\frac{\Phi}{X_t\frac{\partial \Phi}{\partial x}},
\end{equation*}
\noindent which, inserted into (\ref{firstBS}) gives (\ref{NonLinearBS}).\\
\end{proof}
\begin{remark}
In \cite{FaVa12}, utilizing another measure of arbitrage $\tilde{\rho}_t$, the PDE
\begin{equation}\label{BSSFaVa}
\frac{\partial\Phi}{\partial t}+\frac{\sigma_t^2}{2}X_t^2 \frac{\partial^2\Phi}{\partial x^2}
=-\sqrt{2}\tilde{\rho}_t \Phi\left[1+\frac{X_t^2}{\Phi^2}\left(\frac{\partial \Phi}{\partial x}\right)^2-\frac{\partial \Phi}{\partial x}\right]^{\frac{1}{2}},
\end{equation}
\noindent was derived. After some computations, it turns out that
\begin{equation*}
\tilde{\rho}_t=-\frac{1}{\sqrt{2}}\left(\frac{1+\frac{X_t^2}{\Phi^2}\left(\frac{\partial \Phi}{\partial x}\right)^2}{1-\frac{X_t}{\Phi}\frac{\partial \Phi}{\partial x}+\frac{X_t^2}{\Phi^2}\left(\frac{\partial \Phi}{\partial x}\right)^2}\right)^{\frac{1}{2}}\rho_t,
\end{equation*}
thus guaranteeing that both (\ref{BSS}) and (\ref{BSSFaVa}) are two representations of the same non linear Black-Scholes PDE for the price of a derivative in the presence of arbitrage.
\end{remark}
\begin{remark} If arbitrage possibilities are allowed, there is no risk neutral probability measure. Asset pricing can nevertheless be obtained as (conditional) expectation of discounted asset's cash flows with respect to the \textit{minimal arbitrage probability measure}, as explained in \cite{FaTa20}.
\end{remark}

\noindent It is possible to reformulate Theorem \ref{ThmBS} directly in terms of prices and not discounted prices.
\begin{corollary}\label{CorNonLinearBSCF}
Let us consider a market consisting in a bank account with constant instantaneous risk free rate $r$, an asset and a derivative  whose prices $S_t$ and $\Psi(t,S_t)$ follow an It\^{o} process. In particular
\begin{equation*}
dS_t=S_t(\alpha_tdt+\sigma_t dW_t),
\end{equation*}
where $(\alpha_t)_{t\in[0,+\infty[}$ and $(\sigma_t)_{t\in[0,+\infty[}$ are real valued adapted processes, the latter with finite variation.
Assuming that the pay-off function $\Psi=\Psi(t,s)\in C^{1,2}$, the derivative price solves the PDE
\begin{equation}\label{NonLinearBSCF}
\frac{\partial\Psi}{\partial t}+rS_t\frac{\partial \Psi}{\partial s}+\frac{\sigma_t^2}{2}S_t^2\frac{\partial^2\Psi}{\partial s^2}-r\Psi=\rho_t\Psi\left(1+\left(\frac{1}{\Psi}\frac{\partial\Psi}{\partial s}S_t\right)^2\right)^{\frac{1}{2}},
\end{equation}
where $\rho_t$, defined in (\ref{rho}) measures the arbitrage allowed by the market.
\end{corollary}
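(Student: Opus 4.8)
The plan is to prove this corollary purely as a change of variables that carries the discounted‑price equation (\ref{NonLinearBS}) of Theorem \ref{ThmBS} over to actual prices. With a constant short rate the bank account is $S_t^0=e^{rt}$, so the discounted asset price is $X_t=e^{-rt}S_t$ and the discounted derivative price is $\Phi(t,X_t)=e^{-rt}\Psi(t,S_t)$; writing $s=xe^{rt}$ this is the relation $\Phi(t,x)=e^{-rt}\Psi(t,xe^{rt})$ between the two profile functions. First I would apply Theorem \ref{ThmBS} to the discounted market to obtain (\ref{NonLinearBS}) for $\Phi$, and then differentiate the relation above by the chain rule. Since the discount factor is deterministic, the volatility structure is unaffected, so the arbitrage coefficient $\rho_t$ of (\ref{rho}) is the same market quantity in both formulations; no recomputation of $\rho_t$ is required.

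The spatial derivatives come out immediately: the factors $e^{-rt}$ and $e^{rt}$ cancel in $\partial_x\Phi=\partial_s\Psi$, while $\partial_{xx}\Phi=e^{rt}\partial_{ss}\Psi$. The one place that demands attention is the time derivative, precisely because the substitution $s=xe^{rt}$ is itself time dependent; carrying the contribution $\partial s/\partial t=rs$ through gives
\begin{equation*}
\frac{\partial\Phi}{\partial t}=e^{-rt}\left(\frac{\partial\Psi}{\partial t}+rS_t\frac{\partial\Psi}{\partial s}-r\Psi\right),
\end{equation*}
and this is exactly where the extra transport and decay terms $rS_t\,\partial_s\Psi-r\Psi$ of (\ref{NonLinearBSCF}) are generated. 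A naive $\partial_t$ that forgot the $rs\,\partial_s\Psi$ piece would lose precisely the drift term of the target PDE.

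Substituting these identities into the left‑hand side of (\ref{NonLinearBS}), and using $X_t^2=e^{-2rt}S_t^2$ together with $\partial_{xx}\Phi=e^{rt}\partial_{ss}\Psi$, every summand picks up a common factor $e^{-rt}$, so the left‑hand side equals $e^{-rt}$ times the left‑hand side of (\ref{NonLinearBSCF}). For the right‑hand side the key observation is that the argument of the square root is invariant under the deterministic discounting:
\begin{equation*}
\frac{1}{\Phi}\frac{\partial\Phi}{\partial x}\,X_t=\frac{1}{e^{-rt}\Psi}\,\frac{\partial\Psi}{\partial s}\,e^{-rt}S_t=\frac{1}{\Psi}\frac{\partial\Psi}{\partial s}\,S_t,
\end{equation*}
so the right‑hand side of (\ref{NonLinearBS}) equals $\rho_t e^{-rt}\Psi\bigl(1+(\tfrac{1}{\Psi}\partial_s\Psi\,S_t)^2\bigr)^{1/2}$, again $e^{-rt}$ times the right‑hand side of (\ref{NonLinearBSCF}).

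The proof then finishes by cancelling the common factor $e^{-rt}$ from both sides. I expect the main obstacle to be bookkeeping rather than analysis: one must keep straight that the drift entering Theorem \ref{ThmBS} is the \emph{discounted} drift, handle the time‑dependent change of variables correctly so that the $rS_t\,\partial_s\Psi-r\Psi$ terms are produced, and verify the invariance of the square‑root argument so that $\rho_t$ transforms with the same factor as everything else. Beyond these chain‑rule identities no fresh It\^o expansion is needed, since all the stochastic content is already supplied by Theorem \ref{ThmBS}.
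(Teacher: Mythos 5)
Your proposal is correct and takes essentially the same approach as the paper: both proofs insert the change of variables $\Phi(t,x)=\Psi(t,s)e^{-rt}$, $x=e^{-rt}s$ into (\ref{NonLinearBS}) and apply the chain rule, with the crucial time-dependent contribution $\partial s/\partial t=rs$ generating the terms $rS_t\,\partial_s\Psi-r\Psi$, to arrive at (\ref{NonLinearBSCF}). Your write-up merely makes explicit the ``after some algebra'' step the paper leaves to the reader, including the cancellation of the common factor $e^{-rt}$ and the invariance of the square-root argument.
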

\noindent Note that in the (ZC) case (\ref{NonLinearBSCF}) becomes the celebrated linear Black-Scholes PDE well known from textbooks.
\begin{proof}
In the equation (\ref{NonLinearBS}) we insert
\begin{equation*}
\left\{
  \begin{array}{l}
    \Phi(t,x)=\Psi(t,s)e^{-rt}\\
    x=e^{-rt}s,
  \end{array}
\right.
\end{equation*}
\noindent and, taking into account that
\begin{equation*}
\frac{\partial}{\partial x}=e^{rt}\frac{\partial}{\partial s}\qquad\qquad
\frac{\partial^2}{\partial x^2}=e^{2rt}\frac{\partial^2}{\partial s^2}\qquad\qquad
\frac{\partial s}{\partial t}=rs,
\end{equation*}
we obtain, after some algebra equation (\ref{NonLinearBSCF}).
\end{proof}

\subsection{Approximate Solution of the Modified Black-Scholes PDE}
In this subsection we derive a dependence relation between a call option price, the price of its underlying  and the arbitrage measure $\rho$ in an implicit form. For this purpose, we assume that the arbitrage measure $\rho_t \equiv \rho$ is constant during the period considered, typically between $0$ and the derivative maturity $T$.
As \cite{FaVa12} discussed empirically, arbitrage measure is relatively small so we consider perturbations with respect to $\rho$ and seek an approximate solution of the modified Black-Scholes PDE (\ref{NonLinearBS}). We note that the non linear term of the modified Black-Scholes PDE (\ref{NonLinearBS}) is multiplied by $\rho$ linearly.
\begin{theorem}
For sufficiently small $\rho>0$, an approximated solution of the modified Black-Scholes PDE (\ref{NonLinearBS}) under the terminal condition  $\Phi(T,X_T) = (X_T- K)^+$, where $K$ is the strike price at time $T$ on the discounted value of the underlying with constant volatility $\sigma$, is given by
\begin{equation}\label{solcall}
\Phi(t,X_t) = Ke^{\frac{1}{2}\log \frac{X_t}{K}-\frac{1}{8}\sigma^2 (T-t)} u\left(\frac{1}{2}\sigma^2 (T-t),\log \frac{X_t}{K}\right),
\end{equation}
\noindent where
\begin{equation*}
u(\tau,y)=u_0(\tau,y)+\rho U_1(\tau,y)+\rho^2 U_2(\tau,y)+O(\rho^3)\quad(\rho\rightarrow0)
\end{equation*}
and $u_0(\tau,y)$ is the solution of $(\partial _{\tau } - \partial^2_y )u_0(\tau , y) =  0$ with the initial condition $u(0,y) = \max \{ e^{\frac{y}{2}}-e^{-\frac{y}{2}} ,0 \} $, and
\begin{equation}\label{defu1}
\begin{split}
f(v_1,v_2)&:=\frac{2K}{\sigma^2}\sqrt{\frac{5}{4}v_1^2 +v_1 v_2+v_2^2}\\
G(\tau , y;s,z) &:= \frac{1}{2\sqrt{\pi (\tau -s) }} \exp \left( -\frac{(y-z)^2 }{4 (\tau -s)}\right)\\
U_1(\tau ,y)&:= \int_0^{\tau }ds\int_{-\infty }^{\infty }dz\,G(\tau , y;s,z)f(u_0(s,z),u^{\prime}_0(s,z))\\
U_2(\tau ,y)&:= \int_0^{\tau } ds\int_{-\infty }^{\infty }dz\,G(\tau , y;s,z) \left[f._1(u_0(s,z), u_0^{\prime}(s,z))U_1(s,z)\right.\\
&\qquad\qquad\qquad\qquad\qquad\qquad\left.+ f._2(u_0(s,z) ,u_0^{\prime}(s,z))U_1^{\prime}(s,z)\right].
\end{split}
\end{equation}
The prime $^{\prime}$ denotes the derivative with respect to the second argument and $f._j$ is the derivative of the function $f$ with respect to the $j$th variable.
\end{theorem}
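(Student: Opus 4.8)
The plan is to turn the nonlinear equation (\ref{NonLinearBS}) into an inhomogeneous heat equation through the classical Black--Scholes change of variables, and then to treat the arbitrage term as a small perturbation: expand the solution in powers of $\rho$ and solve the resulting hierarchy of linear heat equations by Duhamel's principle against the kernel $G$.

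First I would pass to the variables $\tau:=\frac12\sigma^2(T-t)$ and $y:=\log(X_t/K)$ and write $\Phi(t,X_t)=Ke^{\frac12 y-\frac14\tau}u(\tau,y)$, which is exactly the Ansatz (\ref{solcall}). Computing $\partial_t\Phi$ and $X_t^2\partial_x^2\Phi$, the left-hand side of (\ref{NonLinearBS}) collapses---the two $\pm\frac14 u$ contributions cancelling---to $\frac12\sigma^2 Ke^{\frac12 y-\frac14\tau}(\partial_y^2 u-\partial_\tau u)$. The key algebraic step is on the right-hand side: one checks that $\frac1\Phi\,\partial_x\Phi\,X_t=(u_y+\frac12 u)/u$, whence the bracket equals $\big(\frac54 u^2+uu_y+u_y^2\big)/u^2$ and, after cancelling the common factor $Ke^{\frac12 y-\frac14\tau}$, the whole right-hand side reduces to a multiple of $\sqrt{\frac54 u^2+uu_y+u_y^2}$. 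This is precisely the combination entering $f$, so the equation becomes $(\partial_\tau-\partial_y^2)u=\rho\,f(u,u_y)$ with $f$ as in (\ref{defu1}). Evaluating the terminal payoff $(X_T-K)^+$ at $\tau=0$ then gives the initial datum $u(0,y)=\max\{e^{y/2}-e^{-y/2},0\}$, and the unperturbed part $u_0$ is nothing but the classical (zero-arbitrage) Black--Scholes value in these coordinates.

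Next I would substitute the regular expansion $u=u_0+\rho U_1+\rho^2 U_2+O(\rho^3)$ into $(\partial_\tau-\partial_y^2)u=\rho f(u,u_y)$ and Taylor-expand $f(u,u_y)=f(u_0,u_0')+\rho\big(f_{,1}(u_0,u_0')U_1+f_{,2}(u_0,u_0')U_1'\big)+O(\rho^2)$. Matching equal powers of $\rho$ produces the hierarchy: order $\rho^0$ gives the homogeneous heat equation for $u_0$ with the initial datum above; order $\rho^1$ gives $(\partial_\tau-\partial_y^2)U_1=f(u_0,u_0')$ with $U_1(0,\cdot)=0$; and order $\rho^2$ gives $(\partial_\tau-\partial_y^2)U_2=f_{,1}(u_0,u_0')U_1+f_{,2}(u_0,u_0')U_1'$ with $U_2(0,\cdot)=0$, the quadratic Taylor terms of $f$ only entering at order $\rho^3$. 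Solving each inhomogeneous problem by Duhamel's formula against the one-dimensional heat kernel $G(\tau,y;s,z)$ yields precisely the integral representations for $U_1$ and $U_2$ recorded in (\ref{defu1}).

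The bookkeeping above is routine; the real obstacle is the analytic justification. The initial datum has a corner at $y=0$ and grows like $e^{y/2}$ as $y\to+\infty$, so I must check that the heat-kernel convolutions defining $U_1$ and $U_2$ converge, i.e. that the Gaussian decay of $G$ dominates the exponential growth of $u_0$, $u_0'$ and hence of $f$. Moreover $f$ contains the square root of the quadratic form $\frac54 u^2+uu_y+u_y^2$, so its smoothness---and with it the legitimacy of the Taylor expansion---requires that this form stay bounded away from zero; this follows once I show $u_0(\tau,\cdot)>0$ for $\tau>0$ via the strong maximum principle for the heat equation. Turning the remainder into a rigorous $O(\rho^3)$ bound would then reduce to a Gr\"onwall or fixed-point estimate on the difference between the exact solution and the truncated series; for the stated ``approximate solution'' it is enough to exhibit the expansion and verify that its residual in (\ref{NonLinearBS}) is of order $\rho^3$.
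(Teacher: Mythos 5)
Your proposal is correct and reproduces exactly the formulas of the statement, but the way you extract $U_1$ and $U_2$ differs from the paper's mechanism, so a comparison is worthwhile. The first half is identical: the paper performs the same change of variables (in two stages, first $v:=K^{-1}\Phi$ with $x=Ke^{y}$, $t=T-2\tau/\sigma^{2}$, then $v=e^{\frac{y}{2}-\frac{\tau}{4}}u$), arriving at the same forced heat equation $(\partial_{\tau}-\partial_{y}^{2})u=\rho f(u,u^{\prime})$ with the same initial datum, by the same cancellation of the $\pm\frac{1}{4}u$ terms that you exhibit. From there, however, the paper does not match powers of $\rho$: it takes a Fourier transform in $y$, writes $u=u_{0}+\int\frac{1}{\sqrt{2\pi}}B(k,\tau)e^{iky}\,dk$, solves the resulting ODE for $B(k,\tau)$ by variation of parameters, and thereby converts the initial-value problem into the functional equation $G[u]:=u-u_{0}-\rho F[u]=0$, where $F[u]$ is the Duhamel integral of $f(u,u^{\prime})$ against the kernel $G(\tau,y;s,z)$. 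It then performs a single Newton step at $u_{0}$, namely $u_{1}=u_{0}+\rho(\mathbb{1}-\rho F^{*}[u_{0}])^{-1}.F[u_{0}]=u_{0}+\rho F[u_{0}]+\rho^{2}F^{*}[u_{0}].F[u_{0}]+O(\rho^{3})$, the G\^{a}teaux derivative $F^{*}[u_{0}]$ supplying exactly the integrand of $U_{2}$. Your regular perturbation expansion with order matching is the more elementary route and yields the same answer, for the reason you yourself note: the quadratic Taylor terms of $f$ only enter at order $\rho^{3}$, so your $U_{1}$ and $U_{2}$ coincide with the paper's $F[u_{0}]$ and $F^{*}[u_{0}].U_{1}$. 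What the paper's route buys is the exact integral equation $u=u_{0}+\rho F[u]$, valid beyond the asymptotic regime, on which Newton's scheme could in principle be iterated; what your route buys is transparency, plus the analytic caveats you raise---the growth of $u_{0}$ against the Gaussian decay of $G$, the corner in the initial datum, the need for $u_{0}>0$ so that the square root is differentiable along the expansion point, and a Gr\"onwall-type remainder bound---none of which the paper addresses, its proof being entirely formal on these points.
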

\begin{proof}
By means of the change of variables as $x = Ke^y,t = T- 2\tau/\sigma^2$ and
\begin{equation*}
\frac{\partial }{\partial t} = -\frac{\sigma^2}{2} \frac{\partial }{\partial \tau }, \ \ \frac{\partial }{\partial x} = \frac{1}{x} \frac{\partial }{\partial y},
\end{equation*}
the modified Black-Scholes PDE (\ref{NonLinearBS}) and the terminal condition $\Phi(T,X_T) = (X_T- K)^+$ are rewritten for the unknown function $v(\tau ,y):=K^{-1}\Phi(t,x)$ as
\begin{eqnarray*}
 \frac{\partial v(\tau ,y)}{\partial \tau } &=&  \frac{\partial^2 v(\tau ,y)}{\partial y^2}  - \frac{\partial v(\tau ,y)}{\partial y}  + \frac{2\rho K}{\sigma^2} \sqrt{v(\tau ,y)^2 + \left( \frac{\partial v(\tau ,y)}{\partial y}  \right) ^2 } \label{PDE2} \\
         v(0,y)  &=&  \max \{ e^y-1,0 \} .
\end{eqnarray*}
%The second step: $u_{\tau } = u_{xx} + f(\tau ,x)$
By introducing the new unknown function $u=u(\tau,y)$ defined as $v(\tau,y) = e^{\frac{y}{2}-\frac{1}{4}\tau} u(\tau ,y)$, we obtain the canonical form of diffusion equation
\begin{equation*}
\frac{\partial u}{\partial \tau } = \frac{\partial^2 u}{\partial y^2}  + \rho f\left(u(\tau,y),u^{\prime}(\tau,y)\right).
\end{equation*}
Here the terminal condition is changed to $u(0,y) = \max \{ e^{\frac{y}{2}}-e^{-\frac{y}{2}} ,0 \} $.
By introducing an unknown function $B(k,\tau)$, suppose that the solution of (\ref{PDE2}) has the form
\begin{eqnarray}
u(\tau, y) = u_0(\tau,y) + \int_{-\infty }^{\infty } \frac{1}{\sqrt{2\pi }} B(k,\tau ) e^{iky} dk,
\label{sol_form50}
\end{eqnarray}
where $u_0(\tau,y)$ is the solution for the case $ \rho =0$, i.e., $(\partial _{\tau } - \partial^2_y )u_0(\tau , y) =  0$. Thus,
\begin{equation*}
u_0(\tau,y) =  \int_{-\infty }^{\infty } G(\tau , y;0,z)  \max \{ e^{\frac{z}{2}}-e^{-\frac{z}{2}} ,0 \}   dz.
\end{equation*}
Inserting the representation of $u_0(\tau,y)$ into (\ref{PDE2}) yields
\begin{equation*}
(\partial_{\tau} - \partial^2_y )u  \ = \int_{-\infty }^{\infty } \frac{1}{\sqrt{2\pi }} \Bigl{\{ }  \frac{\partial B(k,\tau )}{\partial \tau} +k^2B(k,\tau )  \Bigl{ \} } e^{iky} \ = \ \rho f(u,u^{\prime}).
\end{equation*}
Via Fourier transform,
\begin{equation}
 \frac{\partial B(k,\tau )}{\partial \tau}   = -k^2  B(k,\tau ) +  \rho \tilde{f}( \tau ,k ),
 \label{ODE51}
\end{equation}
where
\begin{equation*}
\tilde{f}(\tau,k)  = \int_{-\infty }^{\infty } \frac{1}{\sqrt{2\pi }} f\bigl( u(\tau,y),u^{\prime}(\tau,y) \bigl)  e^{-iky}dy.
\end{equation*}
We solve (\ref{ODE51}) via variation of parameters. By introducing new function $\tilde{B}(k,\tau )$, we assume that the solution has the form
\begin{equation}
B(k,\tau ) = e^{ -k^2\tau }  \tilde{B}(k,\tau ).
\label{B_def}
\end{equation}
Inserting this into (\ref{ODE51}) gives
\begin{equation*}
 e^{ -k^2\tau } \ \frac{\partial \tilde{B}(k,\tau )}{\partial \tau } \ = \  \rho  \tilde{f}(\tau ,k),
\end{equation*}
which is equivalent to
\begin{eqnarray*}
\tilde{B}(k,\tau ) \ = \  \rho  \int_0^{\tau }  e^{k^2 t } \ \tilde{f}(t , k) dt.
\end{eqnarray*}
%terminal condition: when $\tau =0$, $\tilde{B}(k,0) =B(k,0) =0$
Consequently, the difference between the arbitrage solution $u$ and the no arbitrage solution $u_0$ is
\begin{equation}
\begin{split}
&u(\tau,y)-u_0(\tau,y)=\\
                      &= \int_{-\infty }^{\infty } \frac{1}{\sqrt{2\pi }} e^{iky} B(\tau,k)dk \\
                      &= \int_{-\infty }^{\infty }  \frac{1}{\sqrt{2\pi }} e^{iky}e^{ -k^2 \tau }  \Biggl{\{ }    \rho  \int_0^{\tau }  e^{k^2 t } \ \tilde{f}(t , k) dt   \Biggl{ \} } dk\\
                      &= \rho \frac{1}{2\pi }  \int_0^{\tau } \Biggl(       \int_{-\infty }^{\infty }  \Biggl{\{ }  \int_{-\infty }^{\infty } e^{ -k^2 (\tau -s) +ik(y-z) } \ f\bigl( u(s,z),u^{\prime}(s,z) \bigl)  dz  \Biggl{ \} } dk   \Biggl) ds \\
                      &= \rho \int_0^{\tau }  \Bigl( \int_{-\infty }^{\infty } G(\tau , y;s,z)f\bigl( u(s,z),u^{\prime}(s,z) \bigl) dz \Bigl) ds=:\rho F[u](\tau ,y).
\end{split}
\end{equation}
The non linear Black-Scholes PDE (\ref{NonLinearBS}) with the terminal condition is therefore equivalent to the functional equation
\begin{equation}\label{funeq}
G[u]:=u-u_0-\rho F[u]=0,
\end{equation}
which can be solved by a Newton's approximation scheme. The first element of the approximation sequence of the solution $u$ is $u_0$. The second, $u_1$ is the solution of the linearization of (\ref{funeq}) at $u_0$
\begin{equation}\label{linfuneq}
G[u_0]+G^*[u_0].(u_1-u_0)=0,
\end{equation}
where the star $\*$ denotes the G\^{a}teaux derivative. The solution reads
\begin{equation*}
\begin{split}
u_1&=u_0+\rho({\bf 1}-\rho F^*[u_0])^{-1}.F[u_0]\\
&=u_0+\rho({\bf 1}+\rho F^*[u_0]).F[u_0]+O(\rho^3)\\
&=u_0+\rho U_1+\rho^2 F^*[u_0].U_1+O(\rho^3)\quad(\rho\rightarrow0),
\end{split}
\end{equation*}
where $U_1:=F[u_0]$ corresponds to (\ref{defu1}). We now compute the G\^{a}teaux derivative of $F$ at $u_0$ as
\begin{equation*}
\begin{split}
&F^*[u].U_1(\tau ,y) \\
 &\quad=\int_0^{\tau}\,ds\int_{-\infty }^{\infty}dz\, G(\tau, y;s,z)\bigl[ f._{1}(u_0(s,z) , u_0^{\prime}(s,z))U_1(s,z) \\
 &\quad\quad\quad\quad\quad\quad\quad\quad\quad\quad\quad\quad\quad + f._{2}(u_0(s,z), u_0^{\prime}(s,z))U_1^{\prime}(s,z)\bigl].
\end{split}
\end{equation*}
We can now derive the second order approximate solution for $u$ as
\begin{equation*}
\begin{split}
&u(\tau,y)=u_0(\tau,y)+\rho \int_0^{\tau }ds\int_{-\infty }^{\infty }dz\,  G(\tau , y;s,z)f(u_0(s,z),u^{\prime}_0(s,z)) \\
          &\quad\quad\quad\quad\quad\quad + \rho ^2 \int_0^{\tau}\,ds\int_{-\infty }^{\infty}dz\, G(\tau, y;s,z)\bigl [u_1(s,z)f._{1}(u_0(s,z) , u_0^{\prime}(s,z)) \\
          &\quad\quad\quad\quad\quad\quad\quad\quad\quad + u_1^{\prime}(s,z)f._{2}(u_0(s,z), u_0^{\prime}(s,z))\bigl] \\
          & \quad\quad\quad\quad\quad\quad + O(\rho^3)\quad(\rho\rightarrow0).
\end{split}
\end{equation*}
By tracing back of the change of variables in (\ref{NonLinearBS}) we can obtain the solution $\Phi(t,X_t)$ as in (\ref{solcall}).\\
\end{proof}

%%%%%%%%%%%%%%%%%%%%%%%%%%%%%%%%%%%%%%%%%%%%%%%%%%%%%%%%%%%%%%%%%%
%
%
%                     Section 5
%
%
%%%%%%%%%%%%%%%%%%%%%%%%%%%%%%%%%%%%%%%%%%%%%%%%%%%%%%%%%%%%%%%%%%
\section{Conclusions}
We apply Geometric Arbitrage Theory to obtain results in Mathematical Finance, which do not need stochastic differential geometry in their formulation. First, we utilize the equivalence for a certain subclass of It\^{o} processes between the no-unbounded-profit-with-bounded-risk condition and the expected utility maximization to prove the equivalence between the (NUPBR) condition with the (ZC) condition. Then, we generalize the Black-Scholes PDE to markets allowing arbitrage, computing an approximated solution for the non linear PDE for a call option with arbitrage.

\section*{Acknowledgements}

We are grateful to the participants of the Quantitative Methods in Finance congress (Sydney, December 2017) and 10th World Congress of Bachelier Finance Society (Dublin, July 2018) for valuable discussions, especially for Stefan Tappe, suggesting the relation between (NUPBR) and (ZC).
We would like to extend our gratitude to Claudio Fontana, who highlighted that for a previous incorrect version of Proposition \ref{NovikovThm} Bessel's processes, which satisfy (NUPBR) but not (NFLVR) as shown in \cite{Fo15} and in \cite{FoRu13}, would have been a counterexample, thus leading to the current corrected version.

\appendix
\section{Generalized Derivatives of Stochastic Processes}\label{Derivatives}
In stochastic differential geometry one would like to lift
the constructions of stochastic analysis from open subsets of
$\mathbf{R}^N$ to  $N$ dimensional differentiable manifolds. To that
aim, chart invariant definitions are needed and hence a stochastic
calculus satisfying the usual chain rule and not It\^{o}'s Lemma is required,
(cf. \cite{HaTh94}, Chapter 7, and the remark in Chapter 4 at the
beginning of page 200). That is why the papers about geometric arbitrage theory are mainly concerned in
 by stochastic integrals and derivatives meant in \textit{Stratonovich}'s
sense and not in \textit{It\^{o}}'s. Of course, at the end of the computation, Stratonovich integrals can be transformed into It\^{o}'s.
Note that a fundamental portfolio equation, the self-financing condition cannot be directly formally expressed with Stratonovich integrals, but first with It\^{o}'s and then transformed into Stratonovich's, because it is a non-anticipative condition.
\begin{definition}\label{Nelson}
Let $I$ be a real interval and $Q=(Q_t)_{t\in I}$ be a  $\mathbb{R}^N$-valued stochastic process on the probability space
$(\Omega, \mathcal{A}, \mathbb{P})$. The process $Q$ determines three families of $\sigma$-subalgebras of the $\sigma$-algebra $\mathcal{A}$:
\begin{itemize}
\item[(i)] ``Past'' $\mathcal{P}_t$, generated by the preimages of Borel sets in $\mathbf{R}^N$  by all mappings $Q_s:\Omega\rightarrow\mathbf{R}^N$ for $0<s<t$.
\item[(ii)] ``Future'' $\mathcal{F}_t$, generated by the preimages of Borel sets in $\mathbf{R}^N$  by all mappings $Q_s:\Omega\rightarrow\mathbf{R}^N$ for $0<t<s$.
\item[(iii)] ``Present'' $\mathcal{N}_t$, generated by the preimages of Borel sets in $\mathbf{R}^N$  by the mapping $Q_s:\Omega\rightarrow\mathbf{R}^N$.
\end{itemize}
Let $Q=(Q_t)_{t\in I}$ be continuous.
Assuming that the following limits exist, \textbf{Nelson's stochastic derivatives} are defined as
\begin{equation}
\begin{split}
&\mathfrak{D}Q_t:=\lim_{h\rightarrow 0^+} \mathbb{E} \Bigl[  \frac{Q_{t+h}-Q_t}{h} \Bigl| \mathcal{P}_t \Bigl]  \text{: forward derivative,}\\
& \mathfrak{D}_*Q_t:=\lim_{h\rightarrow 0^+}\mathbb{E}\Bigl[ \frac{Q_{t}-Q_{t-h}}{h} \Bigl| \mathcal{F}_{t} \Bigl]  \text{: backward derivative,}\\
&\mathcal{D}Q_t:=\frac{\mathfrak{D}Q_t+\mathfrak{D}_*Q_t}{2} \text{: mean derivative}.
\end{split}
\end{equation}

Let $\mathcal{S}^1(I)$ the set of all processes $Q$ such that
$t\mapsto Q_t$, $t\mapsto \mathfrak{D}Q_t$ and $t\mapsto \mathfrak{D}_*Q_t$ are continuous
mappings from $I$ to $L^2(\Omega, \mathcal{A})$. Let
$\mathcal{C}^1(I)$ the completion of $\mathcal{S}^1(I)$ with respect
to the norm
\begin{equation}
\|Q\|:=\sup_{t\in I} \Bigl(   \|Q_t\|_{L^2(\Omega, \mathcal{A})}+\|\mathfrak{D}Q_t\|_{L^2(\Omega, \mathcal{A})}+\|\mathfrak{D}_*Q_t\|_{L^2(\Omega, \mathcal{A})} \Bigl).
\end{equation}
\end{definition}
\begin{remark}
The stochastic derivatives $\mathfrak{D}$, $\mathfrak{D}_*$ and  $\mathcal{D}$
correspond to It\^{o}'s, to the anticipative and, respectively,  to Stratonovich's integral (cf. \cite{Gl11}). The process space $\mathcal{C}^1(I)$ contains all It\^{o} processes. If $Q$ is a Markov process, then the sigma algebras $\mathcal{P}_t$ (``past'') and $\mathcal{F}_t$ (``future'') in the definitions of forward and backward derivatives can be substituted by the sigma algebra $\mathcal{N}_t$ (``present''), see Chapter 6.1 and 8.1 in (\cite{Gl11}).
\end{remark}

\noindent Stochastic derivatives can be defined pointwise in $\omega\in\Omega$ outside the class $\mathcal{C}^1$ in terms of generalized functions.
\begin{definition}
Let $Q:I\times\Omega\rightarrow\mathbb{R}^N$ be a continuous linear functional in the test processes $\varphi:I\times\Omega\rightarrow\mathbb{R}^N$ for $\varphi(\cdot,\omega)\in C^{\infty}_c(I,\mathbb{R}^N)$. We mean by this that for a fixed $\omega\in\Omega$ the functional $Q(\cdot,\omega)\in\mathcal{D}(I,\mathbb{R}^N)$, the topological vector space of continuous distributions. We can then define
\textbf{Nelson's generalized stochastic derivatives:}
\begin{equation}
\begin{split}
&\mathfrak{D}Q(\varphi_t):=-Q(\mathfrak{D}\varphi_t)\text{: forward generalized derivative,}\\
& \mathfrak{D}_*Q(\varphi_t):=-Q(\mathfrak{D}_*\varphi_t)\text{: backward generalized derivative,}\\
&\mathcal{D}  Q  (\varphi_t):=-Q(\mathcal{D}\varphi_t)\text{: mean generalized derivative}.
\end{split}
\end{equation}
\end{definition}
\noindent If the generalized derivative is regular, then the process has a derivative in the classic sense. This construction is nothing else than a straightforward pathwise lift of the theory of generalized functions to a wider class of stochastic processes which do not a priori allow for Nelson's derivatives in the strong sense.

\bibliographystyle{siamplain}
\bibliography{references}

\begin{thebibliography}{99}

\bibitem{Be01}
  Becherer, D., The Num\'{e}raire Portfolio for Unbounded Semimartingales.
  {\itshape Finance and Stochastics}, 2001, {\bfseries 5}, 327--341.

\bibitem{BeFr02}
  Bellini, F. and Frittelli, M.,  On the Existence of Minimax Martingale Measures.
  {\itshape Mathematical Finance},2002, {\bfseries 12}, 1--21.

\bibitem{BjHu05}
  Bj\"ork,T. and Hult, H.,  A Note on Wick Products and the Fractional Black-Scholes Model.
  {\itshape Finance and Stochastics}, 2005, {\bfseries 9}, 197--209.

\bibitem{Bl81}
  Bleecker, D.,
  {\itshape Gauge Theory and Variational Principles}.
  Addison-Wesley Publishing. (1981) (republished by Dover 2005).

\bibitem{ChLa07}
  Christensen, M. M. and Larsen, K., No Arbitrage and Growth Optimal Portfolio.
  {\itshape Stochastic Analysis and Applications}, 2007, {\bfseries 25},  255--280.

\bibitem{DeSc94}
  Delbaen, F. and Schachermayer, W., A General Version of the Fundamental Theorem of Asset Pricing.
  {\itshape Mathematische Annalen}, 1994, {\bfseries 300}, 463--520.

\bibitem{DeSc08}
  Delbaen, F. and Schachermayer, W.,
  {\itshape The Mathematics of Arbitrage}.
  Springer-Verlag Berlin Heidelberg, 2008.

\bibitem{DeMe80}
  Dellach\'{e}rie, C., and Meyer, P. A.,
  {\itshape Probabilit\'{e} et potentiel II - Th\'{e}orie des martingales - Chapitres 5 \`{a} 8},
  Hermann, 1980.

\bibitem{El82}
  Elworthy, K. D.,
  {\itshape Stochastic Differential Equations on Manifolds},
  London Mathematical Society Lecture Notes Series, 1982.

\bibitem{Em89}
  Em\'{e}ry, M.,
  {\itshape Stochastic Calculus on Manifolds-With an Appendix by P. A. Meyer},
  Springer-Verlag Berlin Heidelberg, 1989.

\bibitem{FaVa12}
  Farinelli, S. and Vazquez, S., Gauge Invariance, Geometry and Arbitrage.
  {\itshape The Journal of Investment Strategies}, 2012, {\bfseries 1}, 23--66.

\bibitem{Fa15}
  Farinelli, S., Geometric Arbitrage Theory and Market Dynamics.
  {\itshape Journal of Geometric Mechanics}, 2015, {\bfseries 7}, 431--471.

\bibitem{Fa21}
  Farinelli, S., Geometric Arbitrage Theory and Market Dynamics reloaded.
  preprint, {\rm arXiv}, 2021.

\bibitem{FaTa20}
   Farinelli, S., and Takada, H., Can You Hear the Shape of Market? Geometric Arbitrage and Spectral Theory,
  preprint, {\rm arXiv} 1509.03264.

\bibitem{FlHu96}
  Flesaker, B. and Hughston, L., Positive Interest.
  {\itshape Risk Magazine}, 1996, {\bfseries 9}, 46--49.

\bibitem{FoeSc04}
  F\"ollmer, H. and Schied, A.,
  {\itshape Stochastic Finance: An Introduction In Discrete Time}.
  Second Edition, De Gruyter Studies in Mathematics, 2004.

\bibitem{Fo15}
  Fontana, C., Weak and Strong No-Arbitrage Conditions for Continuous Financial Markets.
  {\itshape International Journal of Theoretical and Applied Finance}, 2015, {\bfseries 18}, 1--34.

\bibitem{FoRu13}
  Fontana, C. and Runggaldier, W. J., Diffusion-Based Models for Financial Markets without Martingale Measures.
  Chapter 4 In {\it Risk Measures and Attitudes} (eds. Biagini, F., Richter, A. and Schlesinger, H.),
  Springer-Verlag, London, 2013, 45--91.

\bibitem{Gl11}
  Gliklikh, Y. E.,
  {\itshape Global and Stochastic Analysis with Applications to Mathematical Physics},
  Theoretical and Mathematical Physics, Springer-Verlag London, 2010.

\bibitem{HaTh94}
 Hackenbroch, W. and Thalmaier, A.,
 {\itshape Stochastische Analysis. Eine Einf\"{u}hrung in die Theorie der stetigen Semimartingale},
 Teubner Verlag, 1994.

\bibitem{HePl06}
  Heath, D. and Platen, E.,  {\itshape A Benchmark Approach to Quantitative Finance}, Springer, 2006.

\bibitem{Ho03}
  H\"ormander, L.,
  {\itshape The Analysis of Linear Partial Differential Operators I: Distribution Theory and Fourier Analysis}.
  Springer-Verlag Berlin Heidelberg, 2003.

\bibitem{HuPr15}
  Hugonnier, J. and Prieto, R., Asset Pricing with Arbitrage Activity.
  {\itshape  Journal of Financial Economics}, 2015, {\bfseries 115}, 411-428.

\bibitem{Hs02}
   Hsu, E. P.,
  {\itshape Stochastic Analysis on Manifolds},
  Graduate Studies in Mathematics, 2002, {\bfseries 38}, AMS.

 \bibitem{HuSc10}
  Hulley, H. and Schweizer, M.,  $M^6$ - On Minimal Market Models and Minimal Martingale Measures.
  In {\itshape Contemporary Quantitative Finance. Essays in Honour of Eckhard Platen} (eds. Chiarella, C. and Novikov, A.), Springer-Verlag Berlin Heidelberg, 2010, 35--51.

 \bibitem{HuKe04}
  Hunt, P. J. and Kennedy, J. E.,
  {\itshape Financial Derivatives in Theory and Practice},
  Wiley Series in Probability and Statistics, John Wiley \& Sons, 2004.

\bibitem{Il00}
  Ilinski, K., Gauge Geometry of Financial Markets.
  {\itshape J. Phys. A: Math. Gen}, 2000, {\bfseries 33}, 5--14.

\bibitem{Il01}
Ilinski, K.,
  {\itshape Physics of Finance: Gauge Modelling in Non-Equilibrium Pricing}, Wiley, 2001.

\bibitem{JeYo79}
  Jeulin, T. and Yor, M.,  {\itshape In\'{e}galit\'{e} de Hardy, semimartingales, et faux-amis},
  S\'{e}minaire de Probabilit\'{e}s, XIII, 1979,  332--359.

\bibitem{Ka97}
  Kabanov, Y. M., On the FTAP of Kreps-Delbaen-Schachermayer.
  In {\itshape Statistics and Control of Stochastic Processes, The Liptser Festschrift Proceedings of Steklov Mathematical Institute Seminar, Moscow, Russia} (eds. Kabanov, Y. M.), World Scientific, Singapore, 1997, 191--203.

\bibitem{KaKa07}
  Karatzas, I. and Kardaras, C., The Num\'{e}raire Portfolio in Semimartingale Financial Models.
  {\itshape Finance and Stochastics}, 2007, {\bfseries 11}, 447--493.

\bibitem{KaKr94}
  Kabanov, Y. M. and Kramkov, D. O.,  Large Financial Markets: Asymptotic Arbitrage and Contiguity.
  {\itshape Probab. Theory and Its Applications},  {\bfseries 39}, 1994, 222--229.

\bibitem{LoWi00}
  Loewenstein, M. and Willard, G.A.,  Local martingales, arbitrage, and viability.
  {\itshape Economic Theory}, 2000,  {\bfseries 16}, 135--161.

\bibitem{Lu69} Luenberger David G., {\itshape Local Theory of Constrained Optimization: Optimization by Vector Space Methods}, New York John Wiley \& Sons, 1969.

\bibitem{Ma96}
  Malaney, P. N.,
  The Index Number Problem: A Differential Geometric Approach.
  Ph.D thesis, Harvard University Economics Department, 1996.

\bibitem{Pr10}
  Protter, Ph. E.,
  {\itshape Stochastic Integration and Differential Equations: Version 2.1}.
  Stochastic Modelling and Applied Probability, Springer, 2010.

\bibitem{Ro94}
  Rogers, L. C. G., Equivalent Martingale Measures and No-Arbitrage.
  {\itshape Stochastics, Stochastics Rep.}, 1994, {\bfseries 51},  41--49.

\bibitem{Ru13}
 Ruf, J.,  Hedging under Arbitrage.
 {\itshape  Mathematical Finance}, 2013, {\bfseries  23}, 297--317.

\bibitem{Scha01}
  Schachermayer, W.,  Optimal Investment in Incomplete Markets When Wealth May Become Negative.
  {\itshape Annals of Applied Probability}, 2001, {\bfseries 11}, 694--734.

\bibitem{Schw80}
  Schwartz, L.,
  {\itshape Semi-martingales sur des vari\'{e}t\'{e}s et martingales conformes sur des vari\'{e}t\'{e}s analytiques complexes}.
  Springer Lecture Notes in Mathematics, Springer-Verlag Berlin Heidelberg, 1980.

\bibitem{SmSp98}
  Smith, A. and Speed, C.
  {\itshape Gauge Transforms in Stochastic Investment}.
  Proceedings of the 1998 AFIR Colloquim, Cambridge, England. 1998.

\bibitem{St00}
  Stroock, D. W.,
  {\itshape An Introduction to the Analysis of Paths on a Riemannian Manifold}.
  Mathematical Surveys and Monographs, 2000, {\bfseries 74}, AMS.

\bibitem{We06}
  Weinstein, E.,
  {\itshape Gauge Theory and Inflation: Enlarging the Wu-Yang Dictionary to a unifying Rosetta Stone for Geometry in Application}.
  Talk given at Perimeter Institute, (2006).

\bibitem{Yo99}
  Young, K., Foreign Exchange Market as a Lattice Gauge Theory.
  {\itshape Am. J. Phys.}, 1999, {\bfseries 67},  862--868.

\bibitem{Ze95} Zeidler E., {\itshape Applied functional analysis: Variational Methods and Optimization}, Applied Mathematical Sciences 109, New York, NY, Springer-Verlag, 1995.


\end{thebibliography}

\end{document}